\newtheorem{theorem}{Theorem}
\newtheorem{proposition}{Proposition}
\newtheorem{lemma}{Lemma}
\theoremstyle{definition}
\newtheorem{assumption}{Assumption}
\newtheorem{definition}{Definition}
\theoremstyle{remark}
\newtheorem{remark}{Remark}
\begin{document}

\title{Price formation and optimal trading in intraday electricity
  markets\thanks{The authors gratefully acknowledge financial support from the Agence Nationale de Recherche
(project EcoREES ANR-19-CE05-0042) and from the
FIME Research Initiative.}}


\author{Olivier F\'eron \\ Electricit\'e de France \\
              Email: \texttt{olivier-2.feron@edf.fr}  \and Peter Tankov  \\ CREST, ENSAE, Institut Polytechnique de
          Paris, France \\ Email: \texttt{peter.tankov@ensae.fr} (corresponding author)\and Laura Tinsi\\ Electricit\'e de France  \\and CREST, ENSAE, Institut Polytechnique de
          Paris, France\\Email: \texttt{Laura.TINSI@ensae.fr}}


\date{}


\maketitle

\begin{abstract}
We develop a tractable equilibrium model for price formation in intraday electricity markets in the presence of intermittent renewable generation.  Using stochastic control theory, we identify the optimal strategies of agents with market impact and exhibit the Nash equilibrium in closed form for a finite number of agents as well as in the asymptotic framework of mean field games.  Our model reproduces the empirical features of intraday market prices, such as increasing price volatility at the approach of the delivery date and the correlation between price and renewable infeed forecasts, and relates these features with market characteristics like liquidity, number of agents, and imbalance penalty.

\end{abstract}

\noindent Key Words: Intraday electricity market, market impact, renewable energy

\noindent \textbf{JEL Classification:} C73, Q42, D53

\section{Introduction}

The electricity markets around the world are undergoing a major transformation driven by the transition towards a carbon-free energy system. The increasing penetration of intermittent renewables puts a stronger emphasis on short-term electricity trading and balancing. The intraday electricity markets are increasingly used by the renewable producers to compensate forecast errors. This improves market liquidity and at the same time creates feedback effects of the renewable generation on the market price, leading to increased price volatility and negative correlations between renewable infeed and prices. These effects have an adverse impact on the revenues of renewable producers. They are already significant in countries with high renewable penetration and will become even more important as new renewable capacity comes online. A better understanding of the impact of intermittent renewable generation on intraday electricity market prices and trading volumes is therefore needed to ensure the long-term economic sustainability of the renewable energy production.   

In this paper, we build an equilibrium model for the intraday electricity market, aiming to understand the price formation and identify the optimal strategies for market participants in the setting where both the strategies of the agents and the demand or generation forecasts may affect market prices. We consider an intraday electricity market, where the participants optimize their revenues based on imperfect forecasts of terminal demand or production. We place ourselves in the standard linear-quadratic setting with quadratic trading costs and linear market impact.
The actions of each agent therefore impact market prices, leading to a stochastic game where players interact through  the market price. We exhibit a closed-form Nash equilibrium for this game, and provide explicit formulas for the market price and the strategies of the agents under two different settings:
\begin{itemize}
    \item the setting of $N$ identical agents, having complete information about the forecasts of the other agents,
    \item the setting of an infinite number of identical small agents (the mean field), where each agent only observes the aggregate forecast as well as its own forecast.
\end{itemize}
We then show by theoretical analysis and through numerical simulations that our model reproduces the stylized features of the market price, which we document empirically. In particular,
\begin{itemize}
    \item the market price becomes more volatile at the approach of the delivery time, a phenomenon known as Samuelson's effect in the empirical literature on futures markets;
    \item the market price exhibits negative correlation with the total renewable infeed forecast, which grows in absolute value at the approach of the delivery time. 
\end{itemize}
Furthermore, our model provides direct quantitative links between market characteristics and market price features, as well as the gain of individual agents. For instance, 
\begin{itemize}
    \item observed price volatility increases for higher imbalance penalties which force the agents to follow the forecasts more closely;
    \item observed price volatility increases for lower instantaneous trading costs, which allow agents to trade more actively;
    \item increased competition (greater number of agents in the market) limits profit opportunities for individual agents and leads to lower price volatility.
\end{itemize}

Correlations between renewable infeed and intraday market prices have been studied empirically by a number of authors. Kiesel and Paraschiv \cite{kiesel2017econometric} perform an econometric analysis of the German intraday market and show that a deeper penetration of renewable energies increases market liquidity and price-infeed correlations. The wind power output forecast errors thus turn out to be of paramount importance in explaining the price differences between the day ahead and intraday prices. Karanfil and Li \cite{karanfil2017role} draw similar conclusions from an empirical study of the Danish market, and exhibit the impact of renewable energies on prices, bid-ask spread and volatility. Gruet, Rowińska and Veraart \cite{rowinska2018multifactor} establish a negative correlation between the wind energy penetration and the day ahead market prices.  Jonsson, Pinson and Madsen  \cite{jonsson2010market} show that in addition to creating a negative correlation between the renewable infeed and spot prices, a deeper penetration of the intermittent energies significantly modifies the distribution of spot prices.    

Optimal strategies in the intraday market for a single wind energy producer have also been the object of studies both in the price-taker and price-maker context. In the price-taker setting, Garnier and Madlener \cite{garnier2015balancing} solve a discrete-time optimal trading problem to arbitrate between immediate and delayed trading when price and production forecasts are uncertain. {In \cite{morales2010short}, Morales, Conejo and Perez-Ruiz consider a multimarket setting to derive an optimal bidding strategy for a wind energy producer in the day ahead and adjustment markets, while minimizing the cost incurred in the balancing market. Discrete decisions are taken for each delivery period, considering a finite number of probable scenarii. This approach has been enhanced by Zugno, Morales, Pinson and Madsen, in  \cite{zugno2013pool}, where the wind energy producer is now price maker in the balancing market. Following the same framework, Delikaraoglou, Papakonstantinou, Ordoudis and Pinson  \cite{delikaraoglou2015price} formulate a problem where the renewable producer is price maker in both the day ahead and balancing markets and assess the relevance of strategic behavior in the context of high renewable penetration and varying flexible capacities.} Still in the price-maker setting, continuous-time approaches have also been developed. 
Aïd, Gruet and Pham \cite{aid2016optimal}, consider the optimal trading rate and power generation of a thermal producer when the residual demand at the terminal date is random.  In the same trend, Tan and Tankov \cite{tan2018optimal} develop an optimal trading model for a wind energy producer. They quantify the evolution of forecast uncertainty at the approach of the delivery time, and exhibit optimal strategies depending on forecast updates.

In our study, the uncertain renewable production is also a source of randomness, and the producers' trading decisions impact the market. Unlike the previous papers on electricity markets, we consider the equilibrium setting with many agents and determine the market price as the result of their interaction.  Explicit results for dynamic equilibria are often difficult to obtain. In particular, Nash equilibria often lead to systems of coupled partial differential equations. However, the linear-quadratic setting and in particular the Almgren-Chriss framework of linear market impact and quadratic trading costs has become a standard toolbox allowing many authors to obtain the explicit form of equilibrium price under different market designs.

In the N-agent setting, Bouchard et al.~\cite{bouchard2018equilibrium}, study the equilibrium returns in a market with mean-variance optimizing investors under quadratic transactions costs. Closer to our model, {Vo{\ss}} \cite{voss2019two} considers a game of two agents in the Almgren-Chriss framework, interacting through the market impact function, where each agent aims to follow a target as in the single-agent model of {Bank et al.} \cite{bank2017hedging}. In  \cite{bank2018liquidity}, {Bank et al.} apply the Almgren-Chriss framework to the study of liquidity dynamics in OTC dealer markets. Evangelista and Thamsten \cite{evangelista2020finite} consider liquidation games in a finite population of agents with information asymmetry. 

The problem of finding the equilibrium may be simplified further by assuming a continuum of agents and using the mean field game approach. Fu et al. \cite{fu2021mean} consider the optimal liquidation mean-field game in the generalized Almgren-Chriss framework and obtain the optimal strategies and the equilibrium price as a solution to a linear forward-backward stochastic differential equation (FBSDE) with a singular terminal condition. Fu and Horst \cite{fu2018mean} extend these results to a leader-follower setting using the theory of mean-field games with a major player and Fu et al. \cite{fu2020portfolio} extend the framework with a self-exciting order flow. Fujii and Takahashi \cite{fujii2020mean}  find an equilibrium price under market clearing conditions under quadratic trading costs. Casgrain and Jaimungal \cite{casgrain2020mean,casgrain2019mean} used the Almgren and Chriss framework in the mean field setting to deal with  heterogeneous sub-populations of agents with distinct filtration and/or different beliefs for each sub-population. Shrivats, Firoozi and Jaimungal \cite{shrivats2020mean} recently applied the theoretical setting developed in \cite{casgrain2020mean} to the case of trading in solar renewable energy certificate markets. Finally, while this paper was under review, a different equilibrium price model for electricity market was proposed in {Aïd et al.} \cite{aid2020equilibrium}, where in particular the Samuelson's effect is explained through heterogeneity among agents. 

Following these authors, our paper is based on the Almgren-Chriss toolbox, which we use to study market impact, trading strategies and equilibrium prices in intraday electricity markets. Electricity markets are very different from traditional stock/futures markets. The most fundamental difference is the \emph{predictability} of prices: since electricity is non storable, shifts in demand and supply forecasts are reflected in the price. Other features, such as the Samuelson effect (growth of volatility at the approach of the delivery date) are specific to futures markets in general. Our paper is an attempt to model these features of electricity markets in an endogenous way, by relating price formation to production and demand forecasts. The classical Almgren-Chriss setting, extended with production and demand forecasts provides a simple and tractable framework to take into account the different features of electricity markets: market impact; low liquidity, which improves at the approach of the delivery date etc. We justify the use of this framework a posteriori by showing that the price obtained with our model reproduces the main observed empirical features of intraday electricity prices. 

The main difference of our framework with the existing research, motivated by the predictability of electricity prices, is the presence of forecast processes, which determine the terminal constraint on the strategy of each agent. We consider deterministic market impact and trading cost parameters, which enables us to determine the trading strategies and the equilibrium price in explicit form, under very general assumptions on the price process and the forecast processes. In particular, unlike the above quoted papers, the fundamental price process is not assumed to be a martingale. The explicit form of the equilibrium price enables us to carry out a theoretical study of various characteristics of electricity markets, such as observed volatility, price-forecast correlation, market impact of forecast adjustments and trading costs. All these quantities are also determined in explicit form. As a final contribution, in the last section of the paper we perform an empirical analysis of intraday electricity markets using order book data, and show empirically that the qualitative features of electricity markets are reproduced by our model.

The paper is structured as follows. Section \ref{prelim} describes the market and introduces our modeling framework. 
In Section \ref{Completeinfo} we place ourselves in a setting with a finite number of agents, where all agents observe the forecasts of the other agents. In Section \ref{incompleteinfo} we consider the mean field game, where agents only observe their individual forecasts and the common information. To make a connection between the $N$-agent setting and the mean-field game setting, we show in Section \ref{CV} that (i) the $N$-player equilibrium converges to the mean-field equilibrium as $N\to \infty$, and (ii) an $\varepsilon$-Nash equilibrium for the $N$-player problem may be constructed from the mean-field equilibrium. In Section \ref{insights} we use the results of Section \ref{Completeinfo} to analyze theoretically the properties of equilibrium price in electricity markets. Finally, in Section \ref{empirical} we perform an empirical analysis of intraday electricity prices and confront it with the theoretical results obtained in the preceding sections.


\section{Preliminaries on electricity markets}
\label{prelim}
In this paper we consider a short-term electricity market, populated with small agents with identical characteristics. These agents face uncertain demand or supply at some future time, and use the electricity market to manage the associated risk.  While our primary interest is to study the impact of increasing renewable penetration on intraday market prices, the market participants may in principle represent both renewable producers with uncertain generation forecasts and industrial consumers / utilities with uncertain demand. To simplify the language and notation, in the sequel, unless specified otherwise, we will refer to forecasts of all agents as demand forecasts (if the agent is a producer, its demand forecast will therefore be negative). 

In most countries, the short-term electricity markets have the following structure (the specific times correspond to the EPEX Spot/Intraday market):
\begin{itemize}
    \item The day-ahead market is a one-off trading venue, where the agents may make bids until 12PM (noon) on the day preceding the delivery day. At 12:55 the price is fixed using the merit order mechanism and the market clears. A major part of the electricity production is sold on the day-ahead market. 
    \item At 3PM on the day preceding the delivery day, the intraday market opens, allowing continuous trading for each quarter-hour of the delivery day. The intraday market has higher trading cost than the day-ahead market, and is mainly used by market participants to adjust their day-ahead positions following forecast updates. 
    \item 15 minutes before delivery the intraday trading for the given delivery period closes. At this point, negative production  imbalances  must  be  compensated  to  the  market  operator  at  the  ’high imbalance settlement price’, which is higher than the last intraday price, and for positive production imbalances, the producer is compensated at the ’low imbalance settlement price’, which is lower than the intraday market price\footnote{See {\footnotesize{\texttt{www.services-rte.com/en/learn-more-about-our-services/\\becoming-a-balance-responsible-party/Imbalance-settlement-price.html}}} for details}. In addition, high imbalances carry a reputational cost for the market participant. To avoid paying the imbalance penalties, the aggregate position (day-ahead plus intraday) held by the agent at the delivery time must therefore be equal to the realized demand. 
\end{itemize}

To represent this market structure in a simplified way, we consider a fixed delivery period, starting at time $T$, and assume that the day-ahead market allows agents to trade instantaneously, without transaction costs, at time $t=0$, at price denoted by $S_0$. Then, between $t=0$ and $t=T$, the agents may trade in the intraday market, at price $(P_t)_{0\leq t\leq T}$, which contains a market impact component, and subject to transaction costs. Finally, at time $T$, if there is an imbalance, the agents must purchase the missing amount /sell the extra amount of electricity at price $S_T$ without transaction cost, and in addition, pay a penalty depending on the absolute value of the imbalance. In the following section we provide details of the model and compute explicitly the optimal strategies of the agents and the equilibrium intraday market price.

\section{Optimal trading strategies and equilibrium price}
In this section we introduce our model of electricity market and derive explicit expressions for the equilibrium price and optimal equilibrium trading strategies of the agents. We consider both the N-agent setting (Section \ref{Completeinfo}) and the mean-field game setting (Section \ref{incompleteinfo}). Section \ref{CV} clarifies the relationship between the equilibrium strategies and prices in the N-agent market and those of the mean field game limit. 
\subsection{N-player setting}\label{Completeinfo}
In this section we assume that in the market there are $N$ identical agents, and we denote by $\phi^i_t$ the position of $i$-th agent at time $t$. As is common in optimal execution literature, we assume that the position of $i$-th agent is an absolutely continuous process, and we define the \textit{rate of trading} $\dot \phi^i_t$.  We introduce a filtered probability space $(\Omega, \mathcal{F}, \mathbb{F}:=(\mathcal{F}_t)_{t \in [0,T]} ,\mathbb{P})$ to which all processes are adapted, and which models the information available in the market to all the agents. 
{The position of the $i$-th agent at time $t$  is given by $\phi^i_t = \phi^i_0 + \int_{0}^{t}\Dot{\phi^i_s}ds$ with $\phi^i_0 \in \mathcal F_0$ denoting the position of the agent in the day-ahead market.} The fundamental electricity price process is denoted by  $(S_t)_{t \in [0,T]}$, where $S_0$ corresponds to the day-ahead market price and $S_t$ for $0<t<T$ denotes the intraday market price net of the price impact component. 
The intraday market price with the price impact component is denoted by $(P^N_t)_{t \in [0,T]}$. The strategies of the agents impact the market price $P_t^N$ as follows: {\begin{equation}\label{price0}
P^N_t = S_t + a(\bar\phi^N_t-\bar \phi^N_0), \quad \forall t \in [0,T],
\end{equation}}
where $\bar \phi^N_t = \frac{1}{N}\sum_{i=1}^N \phi^i_t$ is the average position of the agents and  $a$ is a constant. The parameter $N$ describes the size of the market (number of agents), it is therefore natural that the trading strategy of each agent has an effect of order of $1/N$ on the market price. 
The permanent component of the price impact of trades in our model is thus linear, which is the only shape  compatible with the absence of arbitrage, see \cite{huberman2004price,gatheral2010no}. On the other hand, the transient component of market impact is not modelled directly. 
{Literature on market microstructure mostly shows that metaorders have a concave transient impact on prices (see Bershova and  Rakhlin \cite{bershova2013non},  Bacry et al.  \cite{bacry2015market}, Bucci et al. \cite{bucci2020co} and Bouchaud \cite{bouchaud2010price}). However, for the sake of simplicity and in order to obtain an analytical solution for our model, we choose a linear impact function as in the seminal papers by Almgren and Chriss \cite{almgren1999value,almgren2001optimal} and many other more recent papers, including  Aïd et al. \cite{aid2016optimal}} in the context of electricity markets. The transient component of the market impact is taken into account indirectly, via a trading cost penalty.

The agents trading in the market at time $t$ incur an instantaneous cost, 
$$
\dot \phi_t^i P^N_t + \frac{\alpha(t)}{2} \dot\phi^i_t{\left(\dot\phi^i_t + b \dot{\bar \phi}^{N,-i}_t\right)}, \quad \forall t \in [0,T]
$$
for the $i$-th agent { where $\dot{\bar \phi}^{N,-i}_t = \frac{1}{N-1} \sum_{j =1, j \ne i}^N \dot{\phi}^{j}_t$}. Here the first term represents the actual cost of buying the electricity, and the second term represents the cost of trading, where $\alpha(.)$ is a continuous strictly positive function on $[0,T]$ reflecting the variation of market liquidity at the approach of the delivery date. { The term $b \dot{\bar \phi}^{N,-i}_t$ with $b>0$ represents the impact of the crowd trading direction on the cost of trading of a single agent, which accounts for possible synchronization of the agents. }  {The instantaneous cost paid by each agent is thus independent of the size of the market. This corresponds to a market where immediately available liquidity (market depth) is low (thus even a minor agent has to pay order book costs) but the order book is resilient (thus the trade of a minor agent  only has a lasting impact of order of $1/N$ on the price). This is consistent with recent empirical and theoretical studies of order book dynamics, for example, according to \cite{donier2015fully}, while the total daily volume
exchange on a typical stock is around 1/200th of its market capitalization, the volume present in the order book at
any given time is 1000 times smaller than this. } 

{Each agent $i$ has a demand forecast $X^i_t$ and aims to maximize her gain from trading in the market under the volume constraint $\phi^i_T = X^i_T$. More precisely, whenever $\phi^i_T \neq X^i_T$, the agent must first purchase the missing amount or sell the extra amount of electricity at price $S_T$ and in addition pay an imbalance penalty $\frac{\lambda}{2}(\phi^i_T - X^i_T)^2$. The actual imbalance mechanism of electricity markets boils down to applying a $L^1$ penalty function to the terminal imbalance; however, large imbalances may also create a reputational damage to the producer, thus a quadratic penalty, which penalizes large imbalances more strongly, appears appropriate. On the other hand, the 'hard constraint' may be recovered from our results by making the penalty parameter $\lambda$ tend to infinity.}


Our main results hold true under the following assumption. 
\begin{assumption}\label{martass}
The process $S$  is $\mathbb F$-adapted and satisfies
\begin{align}
\mathbb E[\sup_{0\leq t\leq T}S_t^2]<\infty. \label{intS}
\end{align}
and the processes $(X^i)_{i=1}^N$ are square integrable $\mathbb F$-martingales. 
\end{assumption}

Considering the demand forecast as a martingale is natural since it is the best estimate at time $t$ of what the demand will be at the delivery time $T$ given our current knowledge $\mathcal{F}_t$.

\begin{definition}[Admissible strategy]\label{adm.def}We say that the strategy $(\phi^i_t)_{t \in [0,T]}$ of the $i$-th agent is admissible if $\phi^i_0\in \mathcal F_0$, the process $(\dot\phi_t)_{t \in [0,T]}$ is $\mathbb F$-adapted and
$$
\mathbb E\Big[(\phi^i_0)^2 + \int_0^T (\dot\phi^i_t)^2dt\Big]<\infty.
$$
\end{definition}

Following the discussion above, the objective function maximized by agent $i$ is written as follows: 
\begin{align}
       J^{N,i}(\phi^i, \phi^{-i}) := - \mathbb{E}\Big[&\underbrace{\phi^i_0 S_0}_{\text{Day ahead}} + \underbrace{\int_{0}^{T}\left\{\frac{\alpha(t)}{2}\Dot{\phi^i_t}{\left(\dot\phi^i_t + b \dot{\bar \phi}^{N,-i}_t\right)}+\Dot{\phi^i_{t}}P^N_t \right\}dt}_{\text{Intraday}} \notag\\ & \underbrace{- (\phi^i_T-X^i_T)S_T+\frac{\lambda}{2}(\phi^i_{T}- X^i_{T})^2}_{\text{Balancing}}\Big]\label{OF},
\end{align}
where $\phi^{-i}:= (\phi^1,\dots, \phi^{i-1}, \phi^{i+1}, \dots, \phi^N)$  is the vector of positions of all agents except the $i$-th one.  Here, the first term corresponds to the day-ahead market transaction, the integral term corresponds to the cost of purchasing electricity in the intraday market, and the term in the second line corresponds to the imbalance payment.

Because of the price impact, each agent's gain is affected by the decisions of others and we thus face a non-cooperative game. The optimal strategy of each player depends on the other players' actions and we want to describe the resulting dynamical equilibrium, which we define formally below.  

\begin{definition}[Nash Equilibrium]

We say that $({\phi^{i*}_t})_{t \in [0,T]}^{i = 1 \dots N}$ is a Nash Equilibrium for the N-player game if it is a vector of admissible strategies,  and for each $i=1,\dots, N$, 
\begin{align}J^{N,i}( \phi^i,\phi^{-i*}) \leq J^{N,i}( \phi^{i*},
\phi^{-i*})\label{nash}\end{align}
for any other admissible strategy $ \phi^{i}$. 
\end{definition}


The following theorem characterizes explicitly the Nash equilibrium of the $N$-player game. In the theorem and its proof, we denote the average forecast process by $\overline X^N_t:= \frac{1}{N}\sum_{i=1}^N X^i_t$ and use the following shorthand notation.
\begin{align}
  \Delta^{N}_{s,t}&:= \int_s^t \frac{\eta^{N}_{u,t}}{\alpha(u){\left(1+\frac{b}{2}\right)}} du\quad &&\text{with}\quad \eta^{N}_{s,t} = e^{-\int_s^t \frac{(N-1)a}{N\alpha(u){\left(1+\frac{b}{2}\right)}}du}\notag\\ 
  \widetilde \Delta^{N}_{s,t} &:=\int_s^t \frac{\tilde \eta^{N}_{u,t}}{\alpha(u)} du,\quad &&\text{with}\quad \tilde \eta^{N}_{s,t} = e^{\int_s^t \frac{a}{N\alpha(u)}du}\notag\\
   I^N_t&:= \int_0^t \frac{\eta^{N}_{s,t}}{\alpha(s) {\left(1+\frac{b}{2}\right)}} S_s ds, \quad &&\widetilde I^N_t := \mathbb E\left[\int_0^T\frac{\eta^{N}_{s,T}}{\alpha(s) {\left(1+\frac{b}{2}\right)}} S_s ds\Big|\mathcal F_t\right],\notag\\
   \widetilde S_t &:= \mathbb E[S_T|\mathcal F_t],\quad &&\check X^i_t = X^i_t - \overline X^N_t.\label{deltanot}
\end{align}

The proof of this theorem can be found in Section \ref{prooft1}.
\begin{theorem}\label{theorem_complete}
Under Assumption \ref{martass}, the unique Nash equilibrium in the $N$-player game is given by
\begin{equation} \label{positionNash}
    \begin{split}
\phi^{i*}_t &=   X^i_0  + \frac{1+ \frac{a}{N}\Delta^{N}_{0,t}}{1+\frac{a}{N} \Delta^{N}_{0,T}} (\widetilde I^N_0-\widetilde S_0 \Delta^N_{0,T})- (I^N_t-\Delta^N_{0,t} \widetilde S_0)
    \\ &+ \int_0^t \Delta^{N}_{s,t}\frac{ \left(\frac{a}{N}+\lambda\right)  d\widetilde I^N_s+ \lambda d\overline X^N_s+d\widetilde S_s}{1+\left(\frac{a}{N}+\lambda\right)  \Delta^{N}_{s,T}}
 + \int_0^t \widetilde \Delta^{N}_{s,t} \frac{\lambda d\check X^i_s}{1+
  \left(\frac{a}{N}+\lambda\right)\widetilde\Delta^{N}_{s,T}}  .
    \end{split}
\end{equation}

The equilibrium price has the following form:
\begin{align}
P^N_t &= S_t  + a\frac{\frac{a}{N}\Delta^{N}_{0,t}}{1+\frac{a}{N} \Delta^{N}_{0,T}} (\widetilde I^N_0-\widetilde S_0 \Delta^N_{0,T})\notag- a(I^N_t-\Delta^N_{0,t} \widetilde S_0)
   \\& + a\int_0^t \Delta^{N}_{s,t}\frac{ \left(\frac{a}{N}+\lambda\right)  d\widetilde I^N_s+ \lambda d\overline X^N_s+d\widetilde S_s}{1+\left(\frac{a}{N}+\lambda\right)  \Delta^{N}_{s,T}}.
\label{eq:price_complete_information}
\end{align}
\end{theorem}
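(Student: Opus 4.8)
The plan is to solve the $N$-player game by a fixed-point argument on the equilibrium strategies, exploiting the linear-quadratic structure to reduce each agent's problem to a linear-quadratic optimal control problem whose solution is explicit. First I would fix an arbitrary candidate profile for the other agents' strategies $\phi^{-i}$ (entering through the average rate $\dot{\bar\phi}^{N,-i}_t$ and, via the price $P^N_t$ in \eqref{price0}, through $\bar\phi^N_t$) and write out the objective $J^{N,i}$ in \eqref{OF}. Substituting the price formula turns the payoff into a quadratic functional of the single control $\dot\phi^i$, with the coupling terms treated as exogenous adapted processes. Because the problem is convex (the instantaneous cost is quadratic with $\alpha(t)>0$ and the terminal penalty is convex with $\lambda>0$), the first-order condition is both necessary and sufficient, so I would derive it by a calculus-of-variations/Gateaux-derivative computation: perturb $\dot\phi^i$ by an admissible $\dot\psi$, set the derivative to zero, and read off the Euler--Lagrange equation.

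The key computation is that the stationarity condition yields a linear relation between $\dot\phi^i_t$, the running price, and a martingale term coming from the terminal conditions (the balancing payment produces a conditional-expectation term involving $\widetilde S_t$ and the penalty forces $\phi^i_T$ toward $X^i_T$). I expect the first-order condition to take the form of a backward relation expressing $\alpha(t)(1+\tfrac b2)\dot\phi^i_t$ as a conditional expectation of future price and forecast increments, which after imposing the martingale property of $S$ under conditioning (via Assumption \ref{martass}) can be integrated. The next step is to impose symmetry/aggregation: average the individual optimality conditions over $i$ to obtain a closed equation for $\bar\phi^N$ alone, solve it explicitly, and then substitute back to recover each $\phi^{i*}$. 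This is where the notation in \eqref{deltanot} is engineered to fit: the exponential kernels $\eta^N$, $\tilde\eta^N$ are precisely the resolvents of the linear ODEs arising from the first-order conditions, and the quantities $\Delta^N$, $\widetilde\Delta^N$, $I^N$, $\widetilde I^N$ package the resulting convolutions. I would verify that the aggregate equation decouples into a ``common'' mode governed by $\Delta^N$ and $\eta^N$ (driven by $\overline X^N$, $\widetilde S$, $\widetilde I^N$) and an ``idiosyncratic'' mode governed by $\widetilde\Delta^N$ and $\tilde\eta^N$ (driven by $\check X^i = X^i-\overline X^N$), which is exactly the two-term structure visible in \eqref{positionNash}.

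The main obstacle, I expect, is the simultaneous resolution of the fixed point: each agent's optimal rate depends on the others through both the permanent impact in $P^N_t$ and the crowd term $b\,\dot{\bar\phi}^{N,-i}_t$ in the trading cost, so the first-order conditions form a coupled linear system of forward-backward type (forward in the state $\phi^i$, backward through the conditional expectations of future quantities). Rather than solving an FBSDE directly, I would use the linearity to guess that the solution is affine in the driving martingales and deterministic in the kernels, then pin down the coefficients by matching the terminal penalty condition at $t=T$ and the day-ahead initialization $\phi^i_0=X^i_0$ at $t=0$. Concretely, the denominators $1+(\tfrac aN+\lambda)\Delta^N_{s,T}$ and $1+(\tfrac aN+\lambda)\widetilde\Delta^N_{s,T}$ should emerge from enforcing the terminal constraint on the integrated dynamics, and the prefactor $\tfrac{1+\tfrac aN\Delta^N_{0,t}}{1+\tfrac aN\Delta^N_{0,T}}$ from matching the initial condition. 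Once the candidate $\phi^{i*}$ is constructed, obtaining the price \eqref{eq:price_complete_information} is immediate by plugging $\bar\phi^{N*}$ into \eqref{price0} and observing that the idiosyncratic $\check X^i$ terms average to zero.

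Finally, I would address \emph{uniqueness} and \emph{admissibility} separately: admissibility follows from Assumption \ref{martass} together with the boundedness of the deterministic kernels on $[0,T]$ (since $\alpha$ is continuous and strictly positive, the integrals defining $\Delta^N,\widetilde\Delta^N$ are finite) and the square-integrability of the martingales $S$, $X^i$; uniqueness follows from the strict convexity of each $J^{N,i}$ in $\dot\phi^i$, which makes the best response unique, so the linear fixed-point system has at most one solution. I would close by verifying that the constructed profile indeed satisfies every individual first-order condition, so that \eqref{nash} holds, confirming it is the Nash equilibrium.
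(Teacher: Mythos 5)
Your plan follows essentially the same route as the paper's proof: a variational first-order condition obtained by perturbation (necessary and sufficient by strict concavity of $J^{N,i}$ in the agent's own strategy), aggregation of the $N$ conditions into a closed linear equation for $\bar\phi^{N*}$, decoupling of the idiosyncratic component $\check\phi^{i*}=\phi^{i*}-\bar\phi^{N*}$, explicit resolution via the exponential kernels, and uniqueness read off from the linear structure. However, there is one genuine error that would make your construction produce the wrong answer: you pin down the free constants by imposing ``the day-ahead initialization $\phi^i_0=X^i_0$.'' In this model the day-ahead position $\phi^i_0$ is not a datum but an $\mathcal F_0$-measurable \emph{decision variable} (it is bought at price $S_0$ and enters the objective through the term $\phi^i_0 S_0$), and your perturbation class --- perturbations of the rate $\dot\phi^i$ only --- leaves it untouched, so you never generate its optimality condition. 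The paper perturbs by $\delta+\int_0^\cdot\nu_s\,ds$ with $\delta\in\mathcal F_0$; the first-order condition in $\delta$ yields the scalar constraint $\mathbb E[\phi^{i*}_T-X^i_T\,|\,\mathcal F_0]=0$, and it is this constraint (in aggregated form $\mathbb E[\bar\phi^{N*}_T-\overline X^N_T\,|\,\mathcal F_0]=0$) that determines $\bar\phi^{N*}_0$ and produces both the prefactor $\frac{1+\frac{a}{N}\Delta^{N}_{0,t}}{1+\frac{a}{N}\Delta^{N}_{0,T}}$ and the trend correction $\widetilde I^N_0-\widetilde S_0\Delta^N_{0,T}$ in \eqref{positionNash}.

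Concretely, evaluating \eqref{positionNash} at $t=0$ gives $\phi^{i*}_0=X^i_0+\frac{\widetilde I^N_0-\widetilde S_0\Delta^N_{0,T}}{1+\frac{a}{N}\Delta^{N}_{0,T}}$, which differs from $X^i_0$ unless $\widetilde I^N_0=\widetilde S_0\Delta^N_{0,T}$ --- an identity that holds when $S$ is a martingale but fails for general square-integrable $S$, and the theorem is stated (and the paper advertises its generality) precisely without the martingale assumption. A profile built on $\phi^i_0=X^i_0$ therefore cannot coincide with \eqref{positionNash}, and it is not a Nash equilibrium either: since the omitted first-order condition is violated, an agent could strictly improve $J^{N,i}$ by shifting her whole position path by a suitable $\mathcal F_0$-measurable constant. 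The fix is exactly the paper's Step 1: enlarge the perturbations to include constant $\mathcal F_0$-shifts, record the resulting constraint, and use it in place of your initial condition; the rest of your argument (aggregation, decoupling, terminal-condition matching giving the denominators $1+\left(\frac{a}{N}+\lambda\right)\Delta^{N}_{s,T}$ and $1+\left(\frac{a}{N}+\lambda\right)\widetilde\Delta^{N}_{s,T}$) then goes through as in the paper. A minor additional caveat: uniqueness of the best response to fixed opponents does not by itself imply uniqueness of the equilibrium; as in the paper's closing remark, uniqueness follows because the equivalent first-order-condition system consists of linear equations, each admitting a unique solution.
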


\paragraph{Discussion}
The day-ahead market position of $i$-th agent is given by
$$
\phi^{i*}_0 =   X^i_0  + \frac{\widetilde I^N_0-\widetilde S_0 \Delta^N_{0,T}}{1+\frac{a}{N} \Delta^{N}_{0,T}}.
$$
The agents, therefore, trade in the day-ahead market based on their forecasts at time $0$ and apply a correction for the potential fundamental price trend, which disappears if the fundamental price is a martingale. 

For nonzero trading costs, the strategies of the agents and thus the price impact have a finite variation. Hence,  the price impact component does not directly induce additional volatility which may be a weakness of the model. However, the drift $\dot{\Bar{\phi}}^{N}$ is stochastic and thus creates additional price variations, making the effective observed volatility larger. We will investigate this phenomenon in more details  in {Section \ref{marketeffect}}. 

The aggregate intraday market strategy $\bar\phi^{N*}$ (given by equation \eqref{avstrat}) and, consequently, the equilibrium price have a complex structure because of the generality of our setting; in particular the fundamental price process $(S_t)_{0\leq t\leq T}$ is only assumed to be square integrable. Under more stringent assumptions, important simplifications can be obtained, as the following examples illustrate. 
\begin{itemize}
    \item Assume that the fundamental price process $S$ is a martingale. Then, $\widetilde I^N_t  = -\int_0^t S_s d\Delta^N_{s,T}+S_t \Delta^N_{t,T}$ and $d\tilde I^N_t = \Delta^N_{t,T} dS_t$. Substituting this into \eqref{eq:price_complete_information}, after cancellations, we find that aggregate strategy does not depend on the fundamental price:
    $$
    \bar \phi^{N*}_t = \overline X^N_0 +\int_0^t \frac{\Delta^{N}_{s,t}\lambda d\overline X^N_s}{1+\left(\frac{a}{N}+\lambda\right)  \Delta^{N}_{s,T}}.
    $$
    In the absence of price trend, the trades are therefore only provoked by forecast adjustments. 
    \item Assume now that the fundamental price contains a martingale component $M$ and a deterministic component $A$: $S_t = A_t + M_t$. The above argument shows that the aggregate strategy does not depend on the martingale part $M$. Thus, we can assume that $S_t$ is deterministic, which means that $\widetilde I^N_t$ and $\widetilde S_t$ are constant, and the aggregate strategy becomes
    \begin{align*}
    \bar\phi^{N*}_t &=   \overline X^N_0  + \frac{1+ \frac{a}{N}\Delta^{N}_{0,t}}{1+\frac{a}{N} \Delta^{N}_{0,T}}  \int_0^T \frac{\eta^{N}_{s,T}}{\alpha(s) {\left(1+\frac{b}{2}\right)}} (A_s-A_T) ds \\ &- \int_0^t \frac{\eta^{N}_{s,t}}{\alpha(s) {\left(1+\frac{b}{2}\right)}} (A_s-A_T) ds
    + \int_0^t \frac{  \Delta^{N}_{s,t}\lambda d\overline X^N_s}{1+\left(\frac{a}{N}+\lambda\right)  \Delta^{N}_{s,T}}
    \end{align*}
    If there is a positive trend in the fundamental price (that is, $A$ is increasing), then the day-ahead position will be below the demand forecast, but there will be a positive trend in the aggregate strategy: the overall price trend will be amplified by the market impact component. 
    \item Consider now the limiting case of infinite penalty: $\lambda\to \infty$. Then, using the dominated convergence as needed, we see that the aggregate strategy satisfies:
    \begin{align*}
    \lim_{\lambda \to \infty}\bar\phi^{N*}_t &=   \overline X^N_0  + \frac{ 1+\frac{a}{N}\Delta^{N}_{0,t}}{1+\frac{a}{N} \Delta^{N}_{0,T}} (\widetilde I^N_0-\widetilde S_0 \Delta^N_{0,T})\\&- (I^N_t-\Delta^N_{0,t} \widetilde S_0)
    + \int_0^t \Delta^{N}_{s,t}\frac{d\widetilde I^N_s+  d\overline X^N_s}{ \Delta^{N}_{s,T}}
    \end{align*}
\item  Finally, let us compute the form of trading strategy in the  limit of zero trading costs. To this end, we assume in addition that the fundamental price process $S$ has a left limit at every point.    Fixing $s<t \in [0,T]$, we have:
$$ \Delta^{N}_{s,t}= { \frac{N}{a(N-1)}\left(1 - e^{-\int_s^t\frac{a(N-1)}{\alpha(l){\left(1+\frac{b}{2}\right)}N}dl}\right)\longrightarrow \frac{N}{a(N-1)}}:=\Delta^*$$
as $\alpha(t)\to 0$ uniformly in $t$.  From the left limit property of $S$, it is easy to see that $I^N_t \to \Delta^* S_t$ almost surely, for every $t$. For similar reasons, using the dominated convergence theorem, $\widetilde I^N_t \to \Delta^*  \widetilde S_t$. Finally,

\begin{align*}
    \lim_{\|\alpha\|\to 0}\bar\phi^{N*}_t =   \overline X^N_0  
    + \frac{N}{a(N-1)}(\widetilde S_t - S_t) + \frac{\lambda}{a+\lambda}(\overline X^N_t - \overline X^N_0)
\end{align*}
Thus, in the absence of trading costs, for $N\geq 2$, the aggregate equilibrium strategy is well defined, and the gain of each agent remains bounded in expectation. This is in contrast with the single-agent case, where the gain may be arbitrarily large, unless the fundamental price process is a martingale. Indeed, in the single-agent case, without transaction costs the objective function writes:
\begin{align*}
J^{1,i}(\phi) =  \mathbb E\left[\int_0^T \phi_t d S_t  - \frac{a}{2}(\phi_T-\phi_0)^2- \frac{\lambda}{2}(X_T - \phi_T)^2 - X_T S_T\right],
\end{align*}
and it is clear that unless the fundamental price process is a martingale, this expression can be made arbitrarily large. This means that the "price of anarchy" in this model is infinite: if the agents chose the same strategy, they could have all obtained an infinite gain, but competition between agents limits everybody's gain to a finite value. 

\item Finally, coming back to the form of the individual agent's strategy $\phi^{i*}_t$, we see that the dependence on the trading cost $\alpha$ is different for the common part of the strategy  and the individual part of the strategy (the last term of the formula). While the common part of the strategy depends on the "effective trading cost" $\alpha(1+b/2)$, taking into account the crowd behavior, the individual part of the strategy depends only on $\alpha$. We conclude that due to additional costs related to crowd behavior of agents, the agents trade less actively in response to common forecast updates than in response to individual forecast updates. 
\end{itemize}

{

\subsection{Mean-field game setting}\label{incompleteinfo}
In this section, we place ourselves in the mean field game limit, that is, we assume the number of agents in the market, $N$ tends to infinity, while the strategy of each agent remains finite. We then consider a generic agent
and denote by $X :=(X_t)_{t\in [0,T]}$ the demand forecast of this agent, by $\phi$ the agent's position and by $\mathbb F$ the filtration which contains the information available to this agent. In addition we introduce a smaller filtration, containing the common noise and denoted by $\mathbb F^0$. This filtration contains the information about the fundamental price and potentially some information about the demand forecast but, in general, not the full individual demand forecast of the generic agent. {We decompose the individual demand forecast as follows: $X_t = \overline{X}_t + \check{X}_t$, where $\overline{X}_t = \mathbb{E}\left[X_t |\mathcal{F}_t^0 \right]$ is common for all agents (it can be seen as a national demand forecast).} In this mean field game setting, 
the average quantities of the $N$-agent problem are replaced with conditional expectations with respect to the common noise filtration $\mathbb F^0$. 

For any $\mathbb{F}$-adapted process $(\zeta_t)_{t \in [0,T]}$, we will denote $\bar \zeta_t = \mathbb{E}[\zeta_t |\mathcal{F}^0_t] = \int_{\mathbb{R}}x \mu_t^{\zeta}(dx)$ where:   $\mu^{\zeta}_t:= \mathcal{L}(\zeta_t|\mathcal{F}^0_t)$.
The game is now represented by the interaction of agents through the conditional distribution flow $\mu^{\phi}_t:= \mathcal{L}(\phi_t|\mathcal{F}^0_t)$ of the state process. 
{The price impact function, defined in the previous section as an expectation with respect to the empirical measure, is now an integral with respect to the measure flow:}
\begin{equation}
    P_t = S_t + a (\bar \phi_t - \bar \phi_0).
\end{equation}
Each individual agent now has a negligible impact on the price, but the aggregate position of all agents has a nonzero impact. Thus, in the mean-field game setting, we consider that the market is very large compared to the size of the individual agent, but the immediately available liquidity in the order book is small, so that even a minor agent pays a non-zero trading cost.

 The objective function for the generic agent is
\begin{align} \label{mfproblem}
    J^{MF}(\phi, \bar \phi) :=   - \mathbb{E}\Big[&\phi_0S_0 + \int_0^T  \frac{\alpha(t)}{2}\dot \phi_t{\left(\dot \phi_t+b \dot{\bar \phi}_t\right)} + \dot\phi_t (S_t + a(\bar \phi_t-\bar \phi_0))dt \notag\\ -&(\phi_T-X_T) S_T+ \frac{\lambda}{2}(\phi_T-X_T)^2\Big].
\end{align}
As in the previous section, each agent maximizes this functional over the set of strategies satisfying Definition \ref{adm.def}. 


We now define the mean field equilibrium.
\begin{definition}[mean field equilibrium]\label{defMFE} 
An admissible strategy\\  $\phi^* := (\phi^*_t)_{t \in [0,T]}$ is a mean field equilibrium if for any admissible strategy $\phi$, 
$$J^{MF}(\phi,\bar \phi^*)\leq J^{MF}(\phi^*,\bar \phi^*).$$ 
\end{definition}

In this section, we make the following assumption. 
\begin{assumption}\label{mfgass}${}$
\begin{itemize}
\item The process $S$ is adapted to the filtration $\mathbb F^0$ and satisfies \eqref{intS}.
\item The process $X$ is a square integrable martingale with respect to the filtration $\mathbb F$. 
\item The process $\overline X$ defined by $\overline X_t := \mathbb E[X_t|\mathcal F^ 0_t]$ for $0\leq t\leq T$ is a square integrable martingale with respect to the filtration $\mathbb F$.
\end{itemize}
\end{assumption}
Note that if $X$ is an $\mathbb F$-martingale, then $\overline X$ is by construction an $\mathbb F^0$-martingale, but it may not necessarily be a martingale in the larger filtration $\mathbb F$.

The following theorem characterizes the mean field equilibrium in our setting. The statement of the theorem appears similar to that of Theorem \ref{theorem_complete}, modulo replacing $\overline X^N$ with $\overline X$ and making $N$ tend to infinity. 
However, the computation of the strategy and the market price in the N-player setting requires the knowledge of the sum of forecasts of all agents whereas in the mean-field setting one needs to know the conditional expectation of the agent's forecast with respect to the 'common knowledge' filtration. 
Thus, the theoretical price given by this theorem can be computed by the regulator, and the strategy of this theorem can be computed by an individual player, both of which do not have the complete information about the forecasts of other players.

{In the theorem and its proof, we use the following shorthand notation.
\begin{align}
  &\Delta_{s,t}:= \int_s^t \frac{\eta_{u,t}}{\alpha(u){\left(1+\frac{b}{2}\right)}} du\quad &&\text{with}\quad \eta_{s,t} = e^{-\int_s^t \frac{a}{\alpha(u) {\left(1+\frac{b}{2}\right)}}du}, \notag\\
   &I_t:= \int_0^t \frac{\eta_{s,t}}{\alpha(s){\left(1+\frac{b}{2}\right)} } S_s ds, \quad &&\widetilde I_t := \mathbb E\left[\int_0^T\frac{\eta_{s,T}}{\alpha(s) {\left(1+\frac{b}{2}\right)}} S_s ds\Big|\mathcal F_t\right], \notag\\
   &\text{and}  \quad \widetilde \Delta_{s,t} :=\int_s^t \alpha^{-1}(u) du.
\end{align}}

\begin{theorem}\label{theorem_mfg}
Under Assumption \ref{mfgass}, the unique mean field equilibrium strategy is given by 
\begin{align}
    \phi^{*}_t &=   X_0 + (\widetilde I_0 - \widetilde S_0 \Delta_{0,T})  - (I_t-\Delta_{0,t} \widetilde S_0)
    \notag\\ &+ \int_0^t \Delta_{s,t}\frac{ \lambda  d\widetilde I_s+ \lambda d\overline X_s+d\widetilde S_t}{1+\lambda \Delta_{s,T}}
 + \int_0^t \widetilde \Delta_{s,t} \frac{\lambda d\check X_s}{1+
  \lambda\widetilde\Delta_{s,T}}.  \label{strathomo}
\end{align}
The equilibrium price has the following form:
\begin{equation}
\label{eq:price_incomplete_information}
P_t = S_t   - a(I_t-\Delta_{0,t} \widetilde S_0)
    + a\int_0^t \Delta_{s,t}\frac{ \lambda  d\widetilde I_s+ \lambda d\overline X_s+d\widetilde S_t}{1+\lambda \Delta_{s,T}}.
\end{equation}
\end{theorem}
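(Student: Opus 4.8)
The plan is to solve the mean field game by the classical two-step fixed-point scheme: I first freeze the conditional-mean process $\bar\phi$ and compute the best response of the generic agent, and then impose the consistency requirement $\bar\phi_t=\mathbb E[\phi_t\mid\mathcal F^0_t]$, solving the resulting system in closed form.

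For the first step I would exploit that, with $\bar\phi$ fixed, the functional $J^{MF}(\cdot,\bar\phi)$ is strictly concave in the control $(\phi_0,\dot\phi)$: the only genuinely quadratic contributions are $-\tfrac{\alpha(t)}2\dot\phi_t^2$ and $-\tfrac{\lambda}2(\phi_T-X_T)^2$, both strictly concave since $\alpha>0$ and $\lambda>0$, whereas every term containing $\bar\phi$ is affine in the control. A first-order condition is therefore necessary and sufficient. Computing the G\^ateaux derivative along an admissible direction $\psi_t=\psi_0+\int_0^t\dot\psi_s\,ds$, rewriting the terminal contribution through the tower property, and using that $\dot\psi$ and $\psi_0\in\mathcal F_0$ are arbitrary, I expect the pointwise optimality condition
\[
\alpha(t)\dot\phi_t+\frac{\alpha(t)b}2\dot{\bar\phi}_t+S_t+a(\bar\phi_t-\bar\phi_0)+Y_t=0,\qquad Y_t:=\mathbb E\big[\lambda(\phi_T-X_T)-S_T\mid\mathcal F_t\big],
\]
together with the boundary relation $S_0+Y_0=0$. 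Here $Y$ is an $\mathbb F$-martingale carrying the terminal imbalance, so this is a linear forward--backward system for $(\phi,Y)$.

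For the second step I would take $\mathbb E[\cdot\mid\mathcal F^0_t]$ in the optimality condition. Since $S$, $\widetilde S$, $\widetilde I$ and $\overline X$ are $\mathbb F^0$-adapted and conditioning commutes with time differentiation, substituting $\bar\phi_t=\mathbb E[\phi_t\mid\mathcal F^0_t]$ gives a closed linear equation for the common part whose effective trading cost is $\alpha(1+b/2)$, which is exactly what generates the kernels $\eta_{s,t}$ and $\Delta_{s,t}$. Subtracting it from the individual equation isolates the idiosyncratic part $\check\phi_t=\phi_t-\bar\phi_t$, governed by an analogous equation with no impact drift and effective cost $\alpha$, producing $\widetilde\Delta_{s,t}$ and the $\check X$-driven term. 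Each linear ODE is then integrated with its integrating factor; for the common part a stochastic Fubini argument yields $\bar\phi_t=\bar\phi_0-I_t-\bar Y_0\Delta_{0,t}-\int_0^t\Delta_{r,t}\,d\bar Y_r$, and the system is closed by conditioning the terminal identity $\bar Y_T=\lambda(\bar\phi_T-\overline X_T)-S_T$ on $\mathcal F^0_t$. This expresses $\bar Y_t$ as an affine functional of $\bar Y$ itself; differentiating the fixed-point relation and collecting the factor $1+\lambda\Delta_{t,T}$ produces $d\bar Y_t=-\big(\lambda\,d\widetilde I_t+\lambda\,d\overline X_t+d\widetilde S_t\big)/(1+\lambda\Delta_{t,T})$, which is the origin of the denominators in \eqref{strathomo}; the initial data $\bar\phi_0,\bar Y_0$ are pinned down by the boundary relation. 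Inserting the resulting $\bar\phi$ into $P_t=S_t+a(\bar\phi_t-\bar\phi_0)$ gives \eqref{eq:price_incomplete_information}, while strict concavity together with unique solvability of the linear consistency system yields uniqueness.

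The main obstacle is the explicit decoupling of this forward--backward system: the control at time $t$ depends through $Y_t$ on the terminal imbalance, which itself depends on the entire future trajectory. Carrying out the stochastic Fubini step and the differentiation of the fixed-point relation for $\bar Y$ that produces the $(1+\lambda\Delta_{s,T})^{-1}$ weights, while simultaneously keeping the $\mathbb F$- and $\mathbb F^0$-conditional expectations separate so that the common and idiosyncratic components split cleanly, is the delicate part; once that structure is in place the remaining steps are routine integrations.
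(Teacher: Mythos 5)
Your scheme---freeze $\bar\phi$, derive the first-order condition as a linear forward--backward system with an $\mathbb F$-martingale $Y$ carrying the terminal imbalance, condition on $\mathcal F^0$ to obtain the common part with effective cost $\alpha(1+b/2)$ and kernels $\eta_{s,t},\Delta_{s,t}$, subtract to isolate the idiosyncratic part with cost $\alpha$ and kernel $\widetilde\Delta_{s,t}$, then close the system to get $d\bar Y_t=-\bigl(\lambda\,d\widetilde I_t+\lambda\,d\overline X_t+d\widetilde S_t\bigr)/(1+\lambda\Delta_{t,T})$---is exactly the paper's route: Theorem \ref{theorem_mfg} is proved there by adapting the proof of Theorem \ref{theorem_complete}, which consists of precisely these steps (variational characterization via the martingale $Y$, integrating factors, common/idiosyncratic split, integration by parts).

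There is, however, one concrete deviation, and it changes the endpoint. You pin down the day-ahead position by the boundary relation $S_0+Y_0=0$. The paper's proof instead uses the side condition $\mathbb E[\phi_T-X_T\mid\mathcal F_0]=0$, i.e.\ $Y_0=-\widetilde S_0$, and it is this condition---not yours---that the stated formula satisfies: taking $\mathbb E[\cdot\mid\mathcal F_0]$ in \eqref{strathomo} at $t=T$, the martingale integrals vanish and $\mathbb E[I_T\mid\mathcal F_0]=\widetilde I_0$, leaving $\mathbb E[\phi^*_T\mid\mathcal F_0]=X_0=\mathbb E[X_T\mid\mathcal F_0]$. Assumption \ref{mfgass} does not make $S$ a martingale, so in general $\widetilde S_0\neq S_0$ and the two conditions are incompatible. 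Your relation forces $\bar Y_0=-S_0$, hence $\bar\phi_0=\overline X_0+\widetilde I_0-S_0\Delta_{0,T}+(\widetilde S_0-S_0)/\lambda$, and the common component of your equilibrium differs from that of \eqref{strathomo} by $(\widetilde S_0-S_0)\bigl(\lambda^{-1}+\Delta_{0,T}-\Delta_{0,t}\bigr)$; the idiosyncratic component is unaffected, since the price terms cancel in $\check Y=Y-\bar Y$. So, carried out as written, your argument does not prove the theorem as stated. The origin of the mismatch is worth noting: your boundary relation is the \emph{complete} first-order condition in $\phi_0$ for the objective as written (the perturbation $\phi\mapsto\phi+\delta$ generates cross terms $\delta(S_0-S_T)$, whose conditional expectation is $\delta(S_0-\widetilde S_0)$), whereas the expansion in the paper's proof of Theorem \ref{theorem_complete} retains only $\lambda(\phi_T-X_T)\delta$, and the stated strategy encodes that choice. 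To land on \eqref{strathomo} you must adopt the paper's condition $\mathbb E[\phi_T-X_T\mid\mathcal F_0]=0$; the two derivations coincide exactly when $\mathbb E[S_T\mid\mathcal F_0]=S_0$ (e.g.\ martingale $S$, as in all of the paper's examples and simulations), and this discrepancy should be flagged explicitly rather than resolved silently in favor of either condition.
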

The proof of Theorem \ref{theorem_mfg} follows the lines of that of Theorem \ref{theorem_complete} with some adjustments, and is thus omitted to save space.  

\subsection{Relationship between N-player setting and MFG setting}\label{CV}

{In this section, we study the relationship between the equilibrium strategies and prices in the $N$-agent market and those of the mean field game limit, and prove the following results.
\begin{itemize}
    \item The market price and the agent's strategy in the $N$-agent model converge to their respective mean field values as $N\to\infty$. This shows that to understand the behavior of agents and prices in the realistic $N$-agent market, one can use the mean-field game model, which does not require the knowledge of individual forecasts, but only that of the common information filtration.  
    \item An approximate equilibrium ($\varepsilon$-Nash equilibrium) in the $N$-player setting may be constructed from the MFG solution. In other words, an agent trading in the $N$-agent market may construct a strategy whose gain is sufficiently close to the optimal equilibrium gain using the mean-field game solution, which does not require the knowledge of the private forecasts of the other agents.  
\end{itemize}
}
To address these questions, we need to make more precise assumptions on the probabilistic setup of the problem. In particular, since we would like to study the convergence of the $N$-agent problem as $N\to \infty$, we consider an infinity of agents. In addition, all $N$-agent problems and the mean field problem must be defined on the same probability space. 
\begin{assumption}${}$\label{samespace.ass}
\begin{itemize}
\item The process $S$  adapted to the filtration $\mathbb F^0$ and satisfies \eqref{intS}. 
\item The processes $(X^i)_{i=1}^\infty$ are square integrable $\mathbb F$-martingales.
\item There exists a square intergrable $\mathbb F$-martingale $\overline X$, such that for all $i\geq 1$, and all $t\in[0,T]$, almost surely, $\mathbb E[X^i_t|\mathcal F^0_t]=\overline X_t$.
\item The processes $(\check X^i)_{i=1}^\infty$ defined by $\check X^i_t = X^i_t - \overline X_t$ for $t\in[0,T]$, are orthogonal square integrable $\mathbb F$-martingales, such that the expectation $\mathbb E[(\check X^i_T)^2]$ does not depend on $i$. 
\end{itemize}
\end{assumption}
Let us fix $N<\infty$, and consider a market with $N$ agents. For a given $i\leq N$, we may define the "mean-field" strategy for the $i$-th agent as follows.
\begin{align}
    \phi^{MF,i*}_t&=   X^i_0 + (\widetilde I_0 - \widetilde S_0 \Delta_{0,T})  - (I_t-\Delta_{0,t} \widetilde S_0)
    \notag\\ &+ \int_0^t \Delta_{s,t}\frac{ \lambda  d\widetilde I_s+ \lambda d\overline X_s+d\widetilde S_t}{1+\lambda \Delta_{s,T}}
 + \int_0^t \widetilde \Delta_{s,t} \frac{\lambda d\check X^i_s}{1+
  \lambda\widetilde\Delta_{s,T}}  \label{mfstrat}
\end{align}
Unlike the true optimal strategy of the $i$-th agent, this strategy is computed using only the common information and the individual information of the $i$-th agent, it does not require the knowledge of the private forecasts of the other agents. Moreover, this strategy does not depend on $N$. The following two results show that, on the one hand, the true optimal strategy of the $i$-th agent in the $N$-player game converges to this mean-field strategy as $N\to \infty$, and on the other hand, that this mean-field strategy, if used by all agents in the $N$-player game, constitutes an $\varepsilon$-Nash equilibrium. Proofs of these results can be found in Appendix \ref{ProofCVNash}.

{
\begin{proposition}\label{CVNash}
Let Assumption \ref{samespace.ass} holds true, and let $\phi^{i*}$ denote the optimal position of the $i$-th agent in the $N$-player setting, given by \eqref{positionNash}, and by $\phi^{MF,i*}$ the optimal position in the mean field setting, given by \eqref{mfstrat}.
Then, for all $N\geq 1$, the differences between the strategy of a single agent, the aggregate strategy and the  equilibrium price in the $N$-agent model and the corresponding quantities in the mean-field model can be bounded as follows.
\begin{multline*}
\sup_{0\leq t\leq T}\mathbb E[(\phi^{i*}_t - \phi^{MF,i*}_t)^2] +\sup_{0\leq t\leq T}\mathbb E[(\overline\phi^{N*}_t - \overline \phi^*_t)^2]+\sup_{0\leq t\leq T}\mathbb E[(P^{N}_t - P_t)^2]\\ + \sup_{0\leq t\leq T}\mathbb E[(\dot{\overline\phi}^{N*}_t - \dot{\overline \phi}^*_t)^2]\leq \frac{C}{N^2} \mathbb E[\sup_{0\leq t\leq T} S_t^2] +  \frac{C}{N^2} \mathbb E[(\overline X_T)^2] + \frac{C}{N} \mathbb E[(\check X_T^i)^2],
\end{multline*}
where the constant $C$ depends only on the coefficients $\alpha$, {$b$}, $\lambda$ and $a$. 
\end{proposition}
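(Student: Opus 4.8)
\section*{Proof proposal for Proposition \ref{CVNash}}

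The plan is to exploit the fact that both the $N$-player equilibrium position \eqref{positionNash} and the mean-field position \eqref{mfstrat} (and likewise the prices \eqref{eq:price_complete_information} and \eqref{eq:price_incomplete_information}) are available in closed form, so that the proof reduces to a term-by-term comparison of the two formulas. I would organize the differences into three families according to the source of error: (i) the deterministic kernels ($\Delta^N$ versus $\Delta$, $\eta^N$ versus $\eta$, $\widetilde\Delta^N$ versus $\widetilde\Delta$, and the prefactors $\tfrac{1+\frac aN\Delta^N_{0,t}}{1+\frac aN\Delta^N_{0,T}}$ and $\frac aN+\lambda$ versus $\lambda$); (ii) the $S$-driven processes ($I^N$ versus $I$, $\widetilde I^N$ versus $\widetilde I$, with $\widetilde S$ identical in both); and (iii) the forecast-averaging error $\overline X^N-\overline X$. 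Matching these to the right-hand side, families (i) and (ii) carry the $\frac{1}{N^2}\mathbb E[\sup_t S_t^2]$ and $\frac{1}{N^2}\mathbb E[(\overline X_T)^2]$ contributions (the latter arising when a kernel discrepancy is integrated against the martingale $\overline X$), while family (iii) alone is responsible for the slower $\frac1N\mathbb E[(\check X^i_T)^2]$ term.

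First I would record the deterministic kernel estimates. Since $\alpha$ is continuous and strictly positive on the compact interval $[0,T]$, it is bounded between two positive constants, and the only $N$-dependence in the exponents defining $\eta^N$ and $\tilde\eta^N$ enters through the factors $\frac{N-1}{N}=1-\frac1N$ and $\frac1N$. A first-order expansion of the exponentials then yields $\sup_{s\le t}|\eta^N_{s,t}-\eta_{s,t}|\le \frac CN$, and after integration $\sup_{s\le t}|\Delta^N_{s,t}-\Delta_{s,t}|\le\frac CN$ and $\sup_{s\le t}|\widetilde\Delta^N_{s,t}-\widetilde\Delta_{s,t}|\le\frac CN$, with $C$ depending only on $a$, $b$, $\lambda$ and the bounds on $\alpha$; the prefactor $\tfrac{1+\frac aN\Delta^N_{0,t}}{1+\frac aN\Delta^N_{0,T}}$ differs from $1$, and $\frac aN+\lambda$ differs from $\lambda$, by $O(1/N)$ uniformly in $t$.

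Next come the two stochastic estimates. For the $S$-driven processes, writing $I^N_t-I_t$ and $\widetilde I^N_t-\widetilde I_t$ as integrals of $\big(\tfrac{\eta^N_{s,t}-\eta_{s,t}}{\alpha(s)(1+b/2)}\big)S_s$ and using the kernel bound above with Jensen's inequality and Doob's maximal inequality (for the conditional-expectation form of $\widetilde I$ and of $\widetilde S$), I obtain $\sup_t\mathbb E[(I^N_t-I_t)^2]+\sup_t\mathbb E[(\widetilde I^N_t-\widetilde I_t)^2]\le\frac{C}{N^2}\mathbb E[\sup_t S_t^2]$. For the forecast-averaging error, Assumption \ref{samespace.ass} gives $\overline X^N_t-\overline X_t=\frac1N\sum_{j=1}^N\check X^j_t$ with the $\check X^j$ orthogonal square-integrable martingales whose terminal second moment is independent of $j$, so that
\begin{equation*}
\sup_{0\le t\le T}\mathbb E\big[(\overline X^N_t-\overline X_t)^2\big]=\sup_{0\le t\le T}\frac1{N^2}\sum_{j=1}^N\mathbb E\big[(\check X^j_t)^2\big]\le\frac1N\,\mathbb E[(\check X^i_T)^2],
\end{equation*}
using orthogonality for the equality and the martingale property for the inequality.

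Finally I would assemble the bound. Each term of $\phi^{i*}_t-\phi^{MF,i*}_t$ is either a difference of products (deterministic kernel)$\times$(process) or a stochastic integral $\int_0^t K_{s,t}\,dM_s$ with deterministic kernel $K_{s,t}$ driven by one of the martingales $\widetilde I$, $\overline X$, $\widetilde S$, $\check X^i$. I would split products through $A^NB^N-AB=(A^N-A)B^N+A(B^N-B)$ and evaluate each integral by the isometry $\mathbb E[(\int_0^t K_{s,t}\,dM_s)^2]=\mathbb E[\int_0^t K_{s,t}^2\,d\langle M\rangle_s]$, so that a kernel error contributes $\|K^N-K\|_\infty^2\,\mathbb E[\langle M\rangle_T]=O(1/N^2)$ while a driver error contributes through $\mathbb E[(M^N_T-M_T)^2]$ taken from the two estimates above. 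Triangle inequality then delivers the claimed inequality for the single-agent term; the aggregate position, the price (read off \eqref{eq:price_complete_information} versus \eqref{eq:price_incomplete_information}), and the trading rate $\dot{\overline\phi}^{N*}-\dot{\overline\phi}^*$ are treated identically, the last after differentiating the kernels in $t$ by Leibniz's rule (the differentiated kernels remain uniformly bounded and still converge at rate $1/N$). The main obstacle is organizational rather than conceptual: correctly attributing the powers of $N$, and in particular recognizing that the common-forecast drift $d\overline X^N$ is the only channel producing the slower $1/N$ rate — because $\overline X^N-\overline X$ is a $\frac1N$-average of $N$ orthogonal idiosyncratic martingales whose $L^2$ error decays only like $N^{-1/2}$ — whereas every kernel discrepancy and every $S$- or $\overline X$-driven term is deterministically $O(1/N)$ and hence enters at order $1/N^2$.
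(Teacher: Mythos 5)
Your proposal is correct and follows essentially the same route as the paper's own proof: both write the difference of the two closed-form expressions, bound the deterministic kernel discrepancies ($\Delta^N$ vs.\ $\Delta$, $g^N$ vs.\ $g$, $\tilde g^N$ vs.\ $\tilde g$) uniformly by $C/N$, control the $S$-driven terms ($I^N-I$, $\widetilde I^N-\widetilde I$) at order $1/N^2$ via Jensen-type estimates, and isolate $\overline X^N-\overline X=\frac1N\sum_j\check X^j$ — whose orthogonality gives the $L^2$ rate $N^{-1/2}$ — as the sole source of the $\frac1N\mathbb E[(\check X^i_T)^2]$ term. Your treatment of the trading-rate bound (differentiating the kernels, with the boundary terms $S_t/\alpha(t)$ cancelling between the two models) likewise matches the paper's auxiliary lemma for $\dot{\overline\phi}^{N*}-\dot{\overline\phi}^*$.
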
}

\begin{proposition} \label{propepsnashMF}
Under Assumption \ref{samespace.ass}, consider the vector of admissible strategies for the $N$-player game defined by equation \eqref{mfstrat} for $i=1,\dots,N$. Then, there is a constant $C<\infty$ which does not depend on $N$, such that for any other vector of admissible strategies $(\phi^i_t)^{i=1\dots,N}_{t\in [0,T]}$ for the $N$-player game,
$$J^{N,i}(\phi^i,\phi^{MF,-i*}) - \frac{C}{N^{\frac{1}{2}}} \leq J^{N,i}( \phi^{MF,i*},\phi^{MF,-i*}), \quad \forall i \in \{1,...,N\},\ \forall t \in [0,T].$$
In other words, the vector of strategies $(\phi^{MF,i*}_t)^{i=1\dots,N}_{t\in [0,T]}$ is an $\varepsilon$-Nash equilibrium for the $N$-player game with $\varepsilon = \frac{C}{N^{\frac{1}{2}}}$.  
\end{proposition}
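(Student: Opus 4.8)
The plan is to run the standard three-step comparison between the $N$-player payoff and the mean-field payoff, where in all three steps the opponents are frozen at their mean-field strategies $\phi^{MF,-i*}$. Concretely, I would establish the chain
\[
J^{N,i}(\phi^i,\phi^{MF,-i*})\ \le\ J^{MF}(\phi^i,\bar\phi^*)+\tfrac{C}{\sqrt N}\ \le\ J^{MF}(\phi^{MF,i*},\bar\phi^*)+\tfrac{C}{\sqrt N}\ \le\ J^{N,i}(\phi^{MF,i*},\phi^{MF,-i*})+\tfrac{2C}{\sqrt N}.
\]
The middle inequality is exactly the best-response property from Definition \ref{defMFE}: since $\phi^{MF,i*}$ is the unique mean-field equilibrium strategy attached to forecast $X^i$ (Theorem \ref{theorem_mfg}), and its conditional mean $\mathbb E[\phi^{MF,i*}_t|\mathcal F^0_t]$ equals $\bar\phi^*_t$ for every $i$ (the individual $\check X^i$-term vanishes under $\mathbb E[\cdot|\mathcal F^0_t]$ and the common part coincides with $\bar\phi^*$), it maximizes $J^{MF}(\cdot,\bar\phi^*)$. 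The two outer inequalities are the comparison estimates, and they carry the entire difficulty.

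The probabilistic input driving the $N^{-1/2}$ rate is that, when the agents $j\neq i$ all play the mean-field strategy, the aggregate they generate is close to $\bar\phi^*$. Decomposing $\phi^{MF,j*}$ into the common part (identical across $j$, and equal under $\mathbb E[\cdot|\mathcal F^0_t]$ to $\bar\phi^*_t$) and the individual part driven by $\check X^j$, the difference reduces to
\[
\bar\phi^{N,-i}_t-\bar\phi^*_t=\frac1{N-1}\sum_{j\neq i}\int_0^t \widetilde\Delta_{s,t}\frac{\lambda\, d\check X^j_s}{1+\lambda\widetilde\Delta_{s,T}}.
\]
By the orthogonality and the common variance of the $\check X^j$ in Assumption \ref{samespace.ass}, this average of $N-1$ orthogonal, identically distributed martingale integrals has $L^2$-norm of order $(N-1)^{-1/2}\,\mathbb E[(\check X^i_T)^2]^{1/2}$, and the same bound holds for the trading rates $\dot{\bar\phi}^{N,-i}-\dot{\bar\phi}^*$. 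I would record these two bounds first.

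For the comparison itself I would expand $J^{N,i}(\phi^i,\phi^{MF,-i*})-J^{MF}(\phi^i,\bar\phi^*)$ term by term. The day-ahead term, the quadratic trading cost $\tfrac{\alpha}{2}(\dot\phi^i)^2$, and the whole balancing block are identical in the two functionals and cancel; what remains, using $\bar\phi^N_t=\tfrac1N\phi^i_t+\tfrac{N-1}N\bar\phi^{N,-i}_t$, is (i) the self-impact $\tfrac{a}{2N}(\phi^i_T-\phi^i_0)^2$ arising from $\int_0^T\dot\phi^i_t\,a(\bar\phi^N_t-\bar\phi^N_0)\,dt$, which is $O(1/N)$; (ii) the crowd-cost mismatch $\tfrac{b}{2}\int_0^T\alpha(t)\dot\phi^i_t(\dot{\bar\phi}^{N,-i}_t-\dot{\bar\phi}^*_t)\,dt$; and (iii) the price mismatch $a\int_0^T\dot\phi^i_t\big(\tfrac{N-1}{N}(\bar\phi^{N,-i}_t-\bar\phi^*_t)-\tfrac1N(\bar\phi^*_t-\bar\phi^*_0)\big)\,dt$. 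Since $(\phi^i_T-\phi^i_0)^2\le T\int_0^T(\dot\phi^i_t)^2dt$, terms (i)–(iii) are all controlled by pairing $\|\dot\phi^i\|_{L^2}$ with the $O(N^{-1/2})$ aggregate differences from the previous paragraph via Cauchy–Schwarz, giving an $O(N^{-1/2})$ bound \emph{provided} $\|\dot\phi^i\|_{L^2}$ is controlled.

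The main obstacle is precisely that $\|\dot\phi^i\|_{L^2}$ is not bounded uniformly over all admissible deviations, and this is what forces the usual reduction. It suffices to prove the statement for deviations with $J^{N,i}(\phi^i,\phi^{MF,-i*})\ge J^{N,i}(\phi^{MF,i*},\phi^{MF,-i*})$, since otherwise there is nothing to prove. For these, I would exploit the coercivity of $J^{N,i}(\cdot,\phi^{MF,-i*})$: the concave terms $-\tfrac12\mathbb E\!\int_0^T\alpha(\dot\phi^i)^2$, $-\tfrac{\lambda}{2}\mathbb E[(\phi^i_T-X^i_T)^2]$ and $-\tfrac{a}{2N}\mathbb E[(\phi^i_T-\phi^i_0)^2]$, together with the fact that the frozen aggregate $\bar\phi^{N,-i}$ is bounded in $L^2$ uniformly in $N$, dominate the affine terms (Young's inequality absorbing the cross terms into $\tfrac{\alpha_{\min}}{2}\|\dot\phi^i\|_{L^2}^2$). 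This makes $J^{N,i}(\cdot,\phi^{MF,-i*})$ tend to $-\infty$ jointly in $\|\dot\phi^i\|_{L^2}$ and $\|\phi^i_0\|_{L^2}$ with $N$-independent constants, so the relevant sublevel set is bounded uniformly in $N$; here one also uses that $J^{N,i}(\phi^{MF,i*},\phi^{MF,-i*})$ is bounded below uniformly in $N$, which follows because $\phi^{MF,i*}$ is explicit and $N$-independent and all data are square integrable. With the resulting uniform bound $\mathbb E\!\int_0^T(\dot\phi^i_t)^2dt\le K$, the estimates for (i)–(iii) become $O(N^{-1/2})$ uniformly, yielding the first inequality of the chain; the last inequality is the same comparison applied to the fixed strategy $\phi^{MF,i*}$, whose $L^2$-norm is bounded directly from its explicit form \eqref{mfstrat}. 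Combining the three inequalities gives the claim with $\varepsilon=C/N^{1/2}$.
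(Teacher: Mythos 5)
Your proposal is correct and follows essentially the same route as the paper: the three-term decomposition with the mean-field best-response property handling the middle term, Cauchy--Schwarz pairing of $\|\dot\phi^i\|_{L^2}$ against the $O(N^{-1/2})$ discrepancy between the frozen aggregate and $\bar\phi^*$, and the coercivity reduction showing that deviations outside a uniformly bounded sublevel set are strictly suboptimal, which is exactly the paper's argument with the constant $C^*$. The only cosmetic difference is that you obtain the $O(N^{-1/2})$ aggregate bound directly from the orthogonality of the $\check X^j$, whereas the paper routes it through Proposition \ref{CVNash}; the underlying estimate is the same.
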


\section{Intraday electricity prices: theoretical insights}\label{insights}
In this section we show through theoretical analysis how the main empirically documented features of electricity markets appear naturally as the result of our model. Empirical illustrations of these features are provided in the following section. We analyze the effect of market structure (number of participants, terminal penalty, trading costs and market impact parameters) on the overall costs/gains of participants as well as on the aggregate market parameters such as price volatility, the correlation between forecast and price and the impact of forecast adjustments on market prices. We consider the $N$-agent framework and make the following additional assumptions to simplify computations.
\begin{itemize}
    \item The fundamental price process $S$ is a martingale orthogonal to the forecast processes of the agents, with $\langle S\rangle_t = \sigma_S^2 t$. \item The trading cost parameter $\alpha$ is constant. 
    \item The forecast processes of agents satisfy
    $$
    \langle \overline X^N\rangle_t = \sigma_X^2 t,\quad  \langle \check X^i\rangle_t = \check\sigma_X^2 t,\quad \text{and}\quad \langle \overline X^N, \check X^i\rangle_t = 0\quad \forall\ i,
    $$ for some constants $\sigma_X$ and $\check \sigma_X$,
    where $\check X^i_t = X^i_t - \overline X^N_t$.
\end{itemize}
In addition, to make the notation more compact throughout this section we write $\tilde \alpha:= \alpha(1+b/2)$. 

Under these assumptions, the coefficients $\eta^{N}_{s,t}$ and $\Delta^{N}_{s,t}$ depend only on $t-s$ and not on $s$ and $t$ separately. We shall therefore write them as  $\eta^{N}_{t-s}$ and $\Delta^{N}_{t-s}$, and similarly for the other coefficients, from now and until the end of this section. The aggregate position of $N$ agents in equilibrium therefore writes:
\begin{align*}
  \bar \phi^{N}_t &=   \overline X^N_0  + \int_0^t \Delta^{N}_{t-s}\frac{ \lambda d\overline X^N_s}{1+\left(\frac{a}{N}+\lambda\right)  \Delta^{N}_{T-s}}.
\end{align*}

\subsection{Price impact of forecast adjustments}
Due to the non-storability of electricity, the prices of this commodity are strongly affected by demand and supply shocks. While for regular commodities these shocks may be compensated by changes in reserves, for electricity this is not possible. As a result, supply shocks caused, for instance, by the power plant breakdowns, and demand shocks often caused by weather forecast changes have a lasting impact on the price. In our model, the market impact of demand/supply shocks can be represented through a jump in the forecast process. An idiosyncratic supply shock may correspond to a jump in the individual forecast process $X^i$, while a generalized demand shock caused by weather forecast update may correspond to a jump in the aggregate forecast process $\overline X^N$. Consider for instance a jump $\Delta\overline X^N$ in the aggregate forecast occurring at time $t^*$.  The impact of this jump on the aggregate strategy $\bar \phi^N_t$, defined as the difference of the strategies with and without the forecast adjustment, is given by
$$
\delta \bar \phi^N_t = \mathbf 1_{t\geq t^*}\frac{\lambda\Delta^{N}_{t-t^*} \Delta \overline X^N}{1+\left(\frac{a}{N}+\lambda\right)\Delta^{N}_{T-t^*}}.
$$

\begin{figure}[h]
    \centering
    \includegraphics[width=0.5\textwidth]{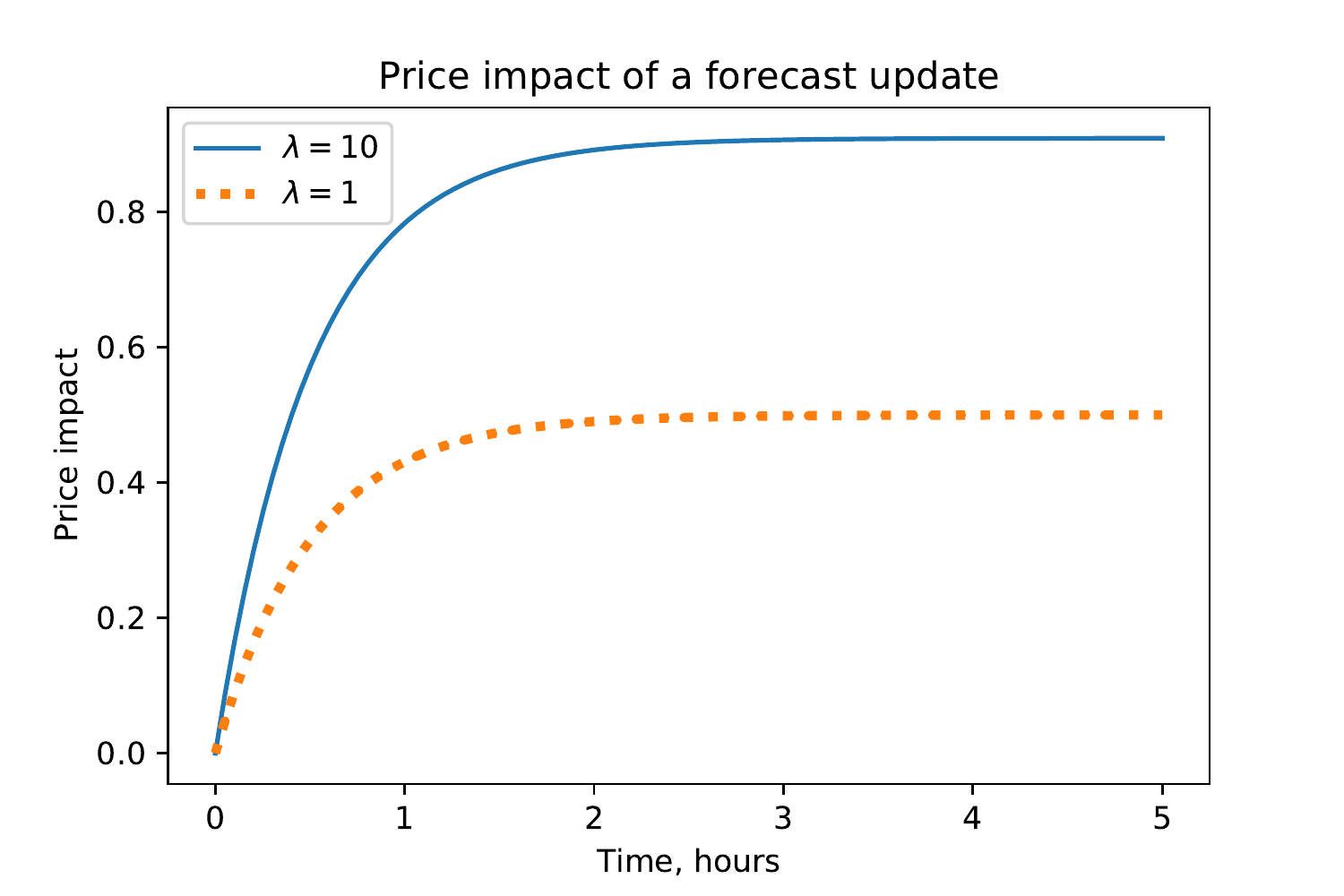}
    \caption{Price impact of an aggregate demand forecast adjustment  at $t=0$ (delivery time is $t=5$). Parameter values: $a=1$, $N=100$, $\alpha = 0.5$, $b=0$, forecast adjustment size: $1$ MW.}
    \label{impact.fig}
\end{figure}
For example, after a positive demand shock, the agents will need to purchase the missing electricity in the market, creating a permanent price impact given by a positive increasing continuous function of time (see Figure \ref{impact.fig}).

\subsection{Volatility and Samuelson's effect}\label{marketeffect}
We have seen that since the strategy $\bar \phi^N$ is differentiable, the quadratic variation of the equilibrium price $P^N_t$ coincides with the quadratic variation of the fundamental price. However, the actual observed volatility, which is estimated from discretely observed prices, may be different. The standard estimator of integrated variance by discrete quadratic variation over the interval $[t,t+h]$ with $M$ steps is given by
$$
Q^M(t,t+h) = \sum_{i=0}^{M-1} (P_{t+\frac{i+1}{M}h} - P_{t+\frac{i}{M}h})^2
$$
To focus on the average behavior of volatility rather than on individual random trajectories, we consider the expectation of this estimator. Finally, to estimate the expected instantaneous variance, it seems natural to consider this estimator over one time step and normalize it by the step size. Thus, the expression
$$
\frac{1}{h}\mathbb E[(P^N_{t+h} - P^N_t)^2]
$$
represents the average instantaneous price variance, estimated over time step $h$. 

The following lemma quantifies the behavior of this expression for small values of $h$. 
\begin{lemma} As $h\to 0$, the equilibrium price satisfies
\begin{align}
\frac{1}{h}\mathbb E[(P^N_{t+h} - P^N_t)^2] = \sigma_S^2  + h\mathcal V_t + O(h^2),\label{truevol}
\end{align}
where 
\begin{align}
\mathcal V_t  =  \frac{a^2 \lambda^2 \sigma_X^2}{\tilde\alpha^2} \int_0^t  \frac{(\eta^{N}_{t-s})^2}{(1+(\frac{a}{N}+\lambda)\Delta^{N}_{T-s})^2}ds.\label{expphi2}
\end{align}
\end{lemma}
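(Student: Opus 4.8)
The plan is to start from the simplified form of the equilibrium price that holds under the martingale assumption of this section. Since $S$ is a martingale, the aggregate strategy reduces to the expression displayed at the beginning of the section, and combining it with \eqref{price0} gives $P^N_t = S_t + a\Psi_t$, where $\Psi_t := \int_0^t \Delta^{N}_{t-s}\, f(s)\, d\overline X^N_s$ and $f(s) := \lambda/\bigl(1+(\tfrac{a}{N}+\lambda)\Delta^{N}_{T-s}\bigr)$. The increment over $[t,t+h]$ then splits as $P^N_{t+h}-P^N_t = (S_{t+h}-S_t) + a(A+B)$, where $A := \int_0^t \bigl[\Delta^{N}_{t+h-s}-\Delta^{N}_{t-s}\bigr] f(s)\, d\overline X^N_s$ captures the change of the impact kernel acting on past forecast innovations, and $B := \int_t^{t+h}\Delta^{N}_{t+h-s}\, f(s)\, d\overline X^N_s$ is the fresh stochastic increment over the new interval.

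Next I would exploit the orthogonality structure to reduce the second moment to a sum of three manageable terms. Because $S$ and $\overline X^N$ are orthogonal martingales, the covariation between $S_{t+h}-S_t$ and $\Psi_{t+h}-\Psi_t$ vanishes; and because $A$ is $\mathcal F_t$-measurable while $B$ is a martingale increment on $(t,t+h]$, the cross term $\mathbb E[AB]$ vanishes after conditioning on $\mathcal F_t$. Hence $\mathbb E[(P^N_{t+h}-P^N_t)^2] = \sigma_S^2 h + a^2\bigl(\mathbb E[A^2]+\mathbb E[B^2]\bigr)$, and the It\^o isometry together with $\langle \overline X^N\rangle_s = \sigma_X^2 s$ turns $\mathbb E[A^2]$ and $\mathbb E[B^2]$ into the deterministic integrals $\sigma_X^2\int_0^t\bigl[\Delta^{N}_{t+h-s}-\Delta^{N}_{t-s}\bigr]^2 f(s)^2\, ds$ and $\sigma_X^2\int_t^{t+h}(\Delta^{N}_{t+h-s})^2 f(s)^2\, ds$ respectively. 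The leading $\sigma_S^2$ in \eqref{truevol} comes exactly from the $S$-term after dividing by $h$.

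It then remains to extract the small-$h$ asymptotics of the two integrals. With $\alpha$ constant one has the closed form $\Delta^{N}_u = \tfrac{N}{(N-1)a}(1-\eta^{N}_u)$ and, crucially, $\tfrac{d}{du}\Delta^{N}_u = \eta^{N}_u/\tilde\alpha$, so that $\Delta^{N}_{t+h-s}-\Delta^{N}_{t-s} = h\,\eta^{N}_{t-s}/\tilde\alpha + O(h^2)$ uniformly in $s\in[0,t]$. Squaring, inserting $f(s)^2 = \lambda^2/\bigl(1+(\tfrac{a}{N}+\lambda)\Delta^{N}_{T-s}\bigr)^2$, and integrating yields $\mathbb E[A^2] = \sigma_X^2 \tfrac{\lambda^2}{\tilde\alpha^2}\,h^2\int_0^t (\eta^{N}_{t-s})^2/\bigl(1+(\tfrac{a}{N}+\lambda)\Delta^{N}_{T-s}\bigr)^2\, ds + O(h^3)$; multiplying by $a^2/h$ reproduces precisely $h\mathcal V_t + O(h^2)$ with $\mathcal V_t$ as in \eqref{expphi2}. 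For the $B$-term, the bound $\Delta^{N}_{t+h-s}\le\Delta^{N}_h = O(h)$ on the integration interval of length $h$ gives $\mathbb E[B^2]=O(h^3)$, so that $\tfrac{a^2}{h}\mathbb E[B^2] = O(h^2)$ is absorbed into the remainder.

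The only genuinely delicate point is the uniform control of the remainders needed to collect everything into a single $O(h^2)$: one must verify that the second-order Taylor remainder of $\Delta^{N}_{t+h-s}-\Delta^{N}_{t-s}$ is bounded uniformly in $s$ (which follows from the boundedness of $u\mapsto (\Delta^{N})''(u) = -\tfrac{(N-1)a}{N\tilde\alpha^2}\eta^{N}_u$) and then apply dominated convergence, using that $f$ is bounded on $[0,t]$ and $0<\eta^{N}\le 1$. Conceptually, the key observation driving the result is that the excess instantaneous variance at order $h$ originates entirely from the deterministic reshaping of the market-impact kernel applied to \emph{past} forecast innovations (the $A$-term), whereas the fresh innovation over $[t,t+h]$ contributes only at order $h^2$ and is therefore invisible at the scale of \eqref{truevol}.
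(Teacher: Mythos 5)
Your proof is correct and follows essentially the same route as the paper's: the same decomposition of the price increment into the fundamental-price part, the kernel-change term acting on past forecast innovations over $[0,t]$, and the fresh increment over $[t,t+h]$, with the cross terms killed by orthogonality, the It\^o isometry applied, and the Taylor expansion $\Delta^{N}_{t+h-s}-\Delta^{N}_{t-s}=h\,\eta^{N}_{t-s}/\tilde\alpha+O(h^2)$ producing $\mathcal V_t$. Your explicit treatment of the cross term $\mathbb E[AB]$ and of the uniform remainder bounds merely spells out details the paper leaves implicit.
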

\begin{proof}
The expected squared change in the price process satisfies:
\begin{align*}
    \mathbb E[(P^N_{t+h} - P^N_t)^2] &= \sigma_S^2 h + 2a\mathbb E[(S_{t+h}-S_t)(\bar \phi^N_{t+h}-\bar \phi^N_t)] + a^2\mathbb E[(\bar \phi^N_{t+h}-\bar \phi^N_t)^2]
\end{align*}
Since the fundamental process $S$ is orthogonal to the forecast process, the second term in the right-hand side above is zero. The third term satisfies:
\begin{align*}
\mathbb E[(\bar \phi^N_{t+h}-\bar \phi^N_t)^2] &= \sigma_X^2 \lambda^2 \int_0^t\frac{(\Delta^N_{t-s} - \Delta^N_{t+h-s})^2}{(1+(\frac{a}{N}+\lambda)\Delta^N_{T-s})^2}ds  \\ &+   \sigma_X^2 \lambda^2 \int_t^{t+h}\frac{(\Delta^N_{t+h-s})^2}{(1+(\frac{a}{N}+\lambda)\Delta^N_{T-s})^2}ds.
\end{align*}
From the explicit form of $\Delta^N_t$, it is clear that the second term above is of order of $O(h^3)$, and the first term equals 
$$
h^2\sigma_X^2 \lambda^2 \int_0^t\frac{((\Delta^N_{t-s})')^2}{(1+(\frac{a}{N}+\lambda)\Delta^N_{T-s})^2}ds,
$$
up to terms of order of $h^3$. 
\end{proof}
As expected, as $h\to 0$, the expression \eqref{truevol} converges to the variance of the fundamental price $\sigma^2_S$. However, an agent using volatility estimator with time step $h$ on the fundamental price process, will find an extra variance of approximately $h\mathcal V_t$ (on average). For a given fixed time step, the function $\mathcal V$ can thus be used as a proxy of the additional volatility of the equilibrium prices. 


\begin{figure}[h]
    \centering
    \includegraphics[width=0.5\textwidth]{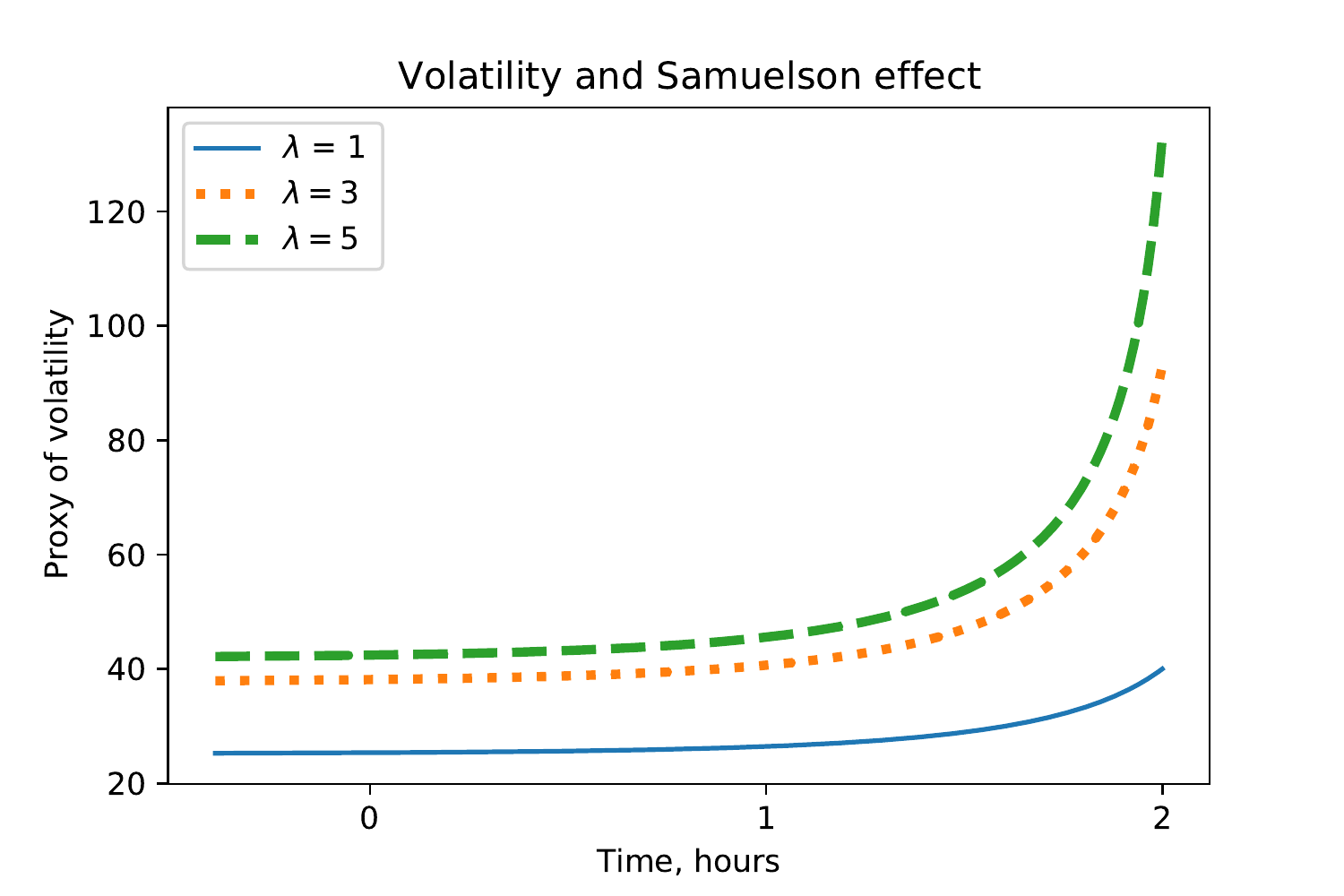}
    \caption{Proxy of expected observed volatility as function of time (delivery time is $t=2$). Parameter values: $a=1$, $N=100$, $\alpha = 0.5$, $b=0$, $\sigma_X = 50$.}
    \label{vol.fig}
\end{figure}
In this section we draw conclusions about the behavior of price volatility by analyzing this proxy, and in Section \ref{empirical} we will show in numerical examples that the actual volatility, estimated from discrete observations of simulated market price exhibits similar behavior. 

\begin{itemize}
\item First of all, since $\Delta^N_t$ is increasing in $t$, the observed price volatility increases at the approach of the delivery date in our model (see Figure \ref{vol.fig}). This phenomenon, well documented in electricity futures and other futures markets \cite{jaeck2016volatility} is known as the Samuelson effect and we also illustrate it empirically in Section \ref{empirical}. 
\item The observed price volatility is increasing in $\lambda$: stronger imbalance penalties lead to higher volatility in the intraday market (see Figure \ref{vol.fig}). Moreover, 
$$
\lim_{\lambda\to \infty} \mathcal V_t = \frac{a^2\sigma_X^2}{\tilde\alpha^2}\int_0^t \frac{(\eta^N_{t-s})^2}{(\Delta^N_{T-s})^2}ds,
$$
and the latter expression explodes for $t\to T$. Thus, we conclude that the Samuelson effect is also stronger for higher imbalance penalties. 
\item In the small liquidity cost regime ($\tilde\alpha\to 0$), for $0<t<T$, $\Delta^N_{t-s}\to \frac{N}{(N-1)a}$ uniformly on $s\in[0,t]$. Therefore, for $N>1$, 
$$
\mathcal V_t \sim \frac{a^2\lambda^2 \sigma_X^2}{\tilde\alpha^2} \frac{(a+\lambda)^2 N^2}{a^2(N-1)^2} \int_0^t (\eta^N_{t-s})^2 ds \sim \frac{\lambda^2 \sigma_X^2}{\tilde\alpha} \frac{(a+\lambda)^2}{2a} \frac{N}{N-1}
$$
 This shows that with decreasing trading costs extra variance of the equilibrium price grows like $\frac{1}{\tilde\alpha}$.  Lower transaction costs allow the agents to follow the forecasts more closely, leading to a higher volatility of the aggregate position and of the market price.  On the other hand, since the function $N\mapsto \frac{N}{N-1}$ is decreasing in $N$, we conclude that price volatility in the small liquidity cost regime is decreasing with the number of agents: in our model, competition between agents increases market frictions and leads to reduced volatility.

    \item In the large liquidity cost regime ($\tilde\alpha\to \infty$), $\eta^N_t\to 1$ and $\Delta^{N}_t \sim \frac{t}{\tilde\alpha }$, so that 
    $$
    \mathcal V_t \sim \frac{a^2\lambda^2 \sigma_X^2 t}{\tilde \alpha^2}. 
    $$
    Higher liquidity costs decrease the trading rate of agents and lead to a lower overall market volatility.

    
\end{itemize}



\subsection{Price-forecast covariance} To understand how the forecast updates influence prices, we compute the covariance of the increment of the aggregate strategy over an interval of length $h$ with the increment of the aggregate forecast over the same interval.  
Using the explicit form of the strategy, we easily obtain,
\begin{align*}
\text{Cov}[\bar \phi^N_{t+h} - \bar \phi^N_{t},\overline X^N_{t+h}  - \overline X^N_{t}] = \lambda (\sigma_X)^2 \int_{t}^{t+h} \frac{\Delta^{N}_{t+h-s}}{1+(\frac{a}{N}+\lambda) \Delta^{N}_{T-s}}ds. 
\end{align*}
From this expression, we conclude that the covariance of equilibrium price with forecast updates increases when the terminal penalty $\lambda$ increases, and when the time $t$ approaches the delivery date . 
\subsection{Trading costs}
In our model, the agents face three types of costs: the trading costs, the market impact costs, and the balancing costs. Using the martingale property of $S$ and other assumptions of this section, these costs are evaluated as follows: 
\begin{align*}
C^{N,i}_{tra} &=    \mathbb{E}\left[\int_{0}^{T}\frac{\alpha}{2}\Dot{\phi^i_t}{\left(\dot \phi^i_t + b \dot{\bar\phi}^{N,-i}_t\right)}dt\right]\\ &= \frac{\alpha(1+b))}{2}\mathbb{E}\left[\int_{0}^{T}({\Dot{\bar\phi}^N_t})^{2}dt\right] + \frac{\alpha}{2}\frac{N-1-b}{N-1}\mathbb{E}\left[\int_{0}^{T}({\Dot{\check\phi}^N_t})^{2}dt\right]\\
C^{N,i}_{imp} &= \mathbb{E}\left[\int_{0}^{T} a\Dot{\phi^i_{t}}(\bar \phi^N_t - \bar \phi^N_0)dt\right] = \frac{a}{2}\mathbb E\left[(\bar \phi^N_T-\bar \phi^N_0)^2\right]\\
C^{N,i}_{bal} &= \frac{\lambda}{2}\mathbb{E}\left[(\phi^i_{T}- X^i_{T})^2\right] = \frac{\lambda}{2}\mathbb{E}\left[(\bar\phi^N_{T}- \overline X^N_{T})^2\right]+\frac{\lambda}{2}\mathbb{E}\left[(\check\phi^i_{T}- \check X^i_{T})^2\right]
\end{align*}
After some tedious but straightforward computations, these costs are found to have the following integral form:
\begin{align*}
C^{N,i}_{tra} &= \frac{1+b}{1+\frac{b}{2}}\frac{ \lambda^2 \sigma_X^2}{4}\int_0^T dt \frac{(1+\eta^{N}_{T-t})\Delta^{N}_{T-t}}{ (1+(\frac{a}{N}+\lambda)\Delta^{N}_{T-t})^2}\\&+\frac{N-1-b}{N-1} 
\frac{\lambda^2 \check\sigma_X^2}{4}\int_0^T dt  \frac{ (\tilde\eta^{N}_{T-t}+1)\widetilde \Delta^{N}_{T-t} }{ (1+(\frac{a}{N}+\lambda)\widetilde\Delta^{N}_{T-t})^2}
\\C^{N,i}_{imp} &=   \frac{a \sigma_X^2 \lambda^2}{2}\int_0^T \frac{(\Delta^{N}_{T-t})^2}{ (1+(\frac{a}{N}+\lambda)\Delta^{N}_{T-t})^2}dt
\\C^{N,i}_{bal} &= \frac{\lambda \sigma_X^2}{2} \int_0^T \frac{(1+\frac{a}{N}\Delta^{N}_{T-t})^2}{ (1+(\frac{a}{N}+\lambda)\Delta^{N}_{T-t})^2}dt+\frac{\lambda\check\sigma_X^2}{2} \int_0^T \frac{(1+\frac{a}{N}\widetilde\Delta^{N}_{T-t})^2}{ (1+(\frac{a}{N}+\lambda)\widetilde\Delta^{N}_{T-t})^2}dt,
\end{align*}
which leads to the following conclusions:
\begin{itemize}
    \item Trading costs are proportional to forecast variances: more precise forecasts lead to lower trading costs. However, while the trading costs and the balancing costs depend both on the volatility of aggregate forecast and that of the individual forecast, the market impact costs only depend on the volatility of the aggregate forecast. Thus, an agent who has a better individual forecast will pay lower trading and balancing costs but the same market impact costs. 
    \item Since both $C^{N,i}_{tra}$ and $C^{N,i}_{imp}$ are increasing in $\lambda$, stronger imbalance penalties lead to higher trading and market impact costs. The balancing cost $C^{N,i}_{bal}$ is increasing in $\lambda$ for small values of $\lambda$, but may become decreasing for large $\lambda$. When $\lambda\to \infty$, the market impact costs and the balancing costs remain bounded, however it can be shown that the single agent trading cost tends to $+\infty$ at the rate of $\log \lambda$, thus very high imbalance penalties lead to prohibitive trading costs and are therefore detrimental for market liquidity. 
    \item In the case of small liquidity costs ($\alpha\to 0$), each component of the cost converges to a nonzero limit:
    \begin{align*}
    C^{N,i}_{tra} &\to \frac{1+b}{1+\frac{b}{2}}\frac{ \lambda^2 \sigma_X^2T}{4}\frac{(N-1)a}{N(a+\lambda)^2} + \frac{N-1-b}{N-1} \frac{\lambda^2\check \sigma_X^2T} {4} \frac{aN T}{(a+\lambda N)^2}\\
    C^{N,i}_{imp} & \to \frac{a\sigma^2_X  \lambda^2 T}{2(a+\lambda^2)} \\
    C^{N,i}_{bal} & \to \frac{\lambda \sigma_X^2 T}{2} \frac{a^2}{(a+\lambda)^2} + \frac{\lambda \check\sigma_X^2 T}{2} \frac{a^2}{(a+\lambda N)^2}.
    \end{align*}
    As the cost per trade decreases, the agents trade more actively so that the overall trading cost does not tend to zero. 
\end{itemize}

\section{Empirical results and numerical illustrations}\label{empirical}

In this section our objective is to  analyze the empirically observed features of intraday market prices, demonstrate that these features are reproduced by our model, and illustrate other properties of our model, such as the convergence of the $N$-agent model to the mean-field limit, with numerical examples.

\subsection{Stylized features of intraday electricity market prices} \label{stylized}
\paragraph{A brief description of our dataset}
To compute the empirical price analyzed in the following sections, we used the limit order book data provided by EPEX electricity market for the Germany delivery zone for the $1^{st}$ quarter of 2015 and January 2017. Although in the market it is possible to trade in quarter-hours, in this study we focus on the full hours only. The dataset contains full information about sell and buy orders recorded on any given day, whether they result in a transaction or not. From this data we reconstruct the state of the order book, which allows us in turn to derive the mid-quote price and the bid-ask spread.
\paragraph{Market liquidity}
In Figure \ref{Fig0}, we plot the distribution of the times of orders and transactions as function of time to delivery computed over all orders and transactions in February 2015. We observe that the liquidity starts to appear only 5-6 hours before delivery, and grows very quickly at the approach of the delivery date. 
\begin{figure}[h]
      \centering
      \caption{Distribution of times orders and transactions in February 2015}
      \label{Fig0}
      \includegraphics[width =0.80\textwidth]{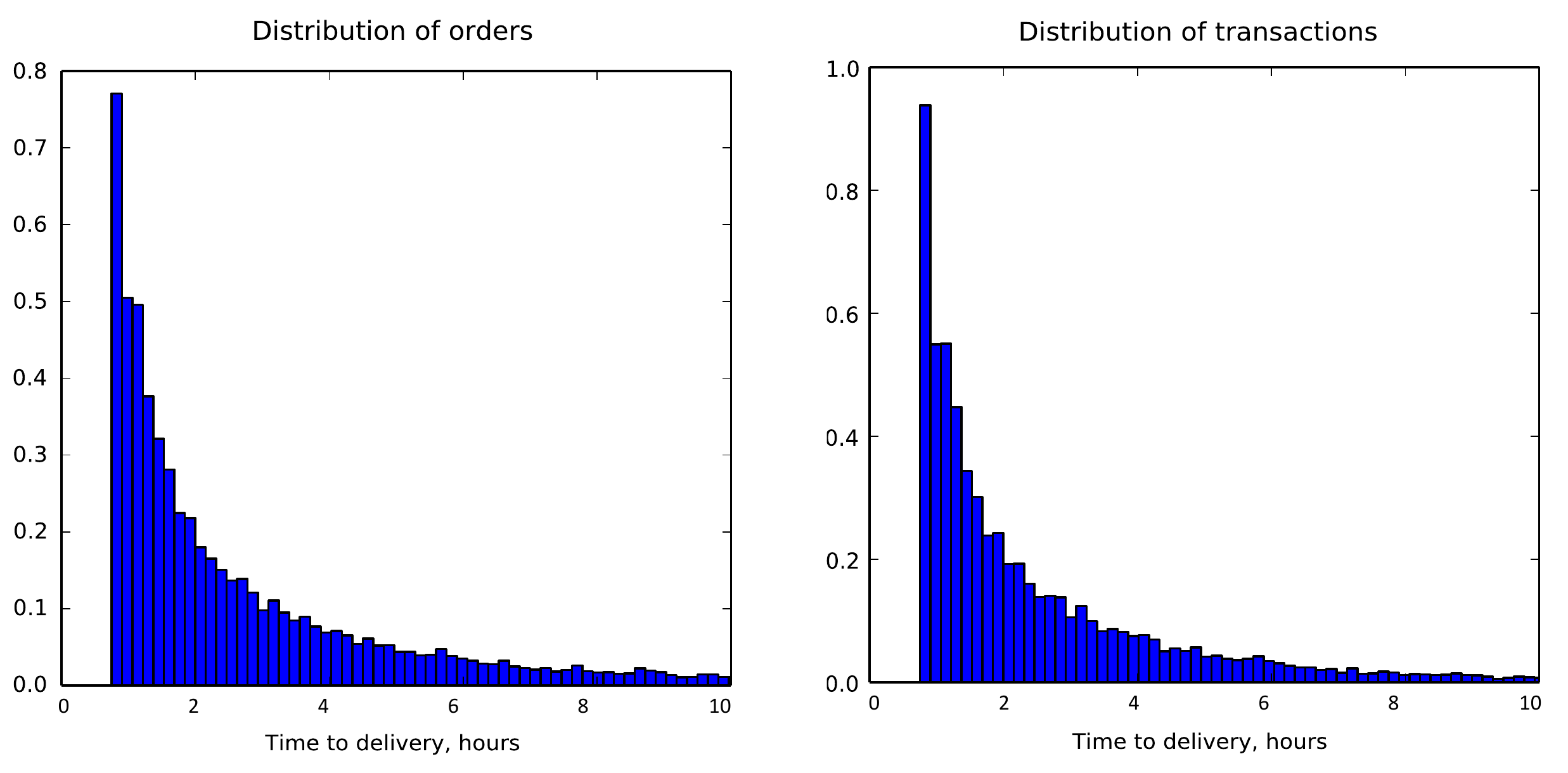}
\end{figure}
{ We also performed an estimation of the  bid-ask spread in the German intraday electricity market for January 2017 for different delivery times. For each  delivery time, Figure \ref{FigSpread}, shows that the spread averaged over each hour and over all days in January 2017 decreases as we approach the end of the trading period.} This is consistent with the assumption that the market is used by the renewable energy producers to adjust their positions when precise forecasts become available. 
\begin{figure}[h]
      \centering
      \caption{Average spread per hour as a function of time to delivery over January 2017}
      \label{FigSpread}
      \includegraphics[width =0.5\textwidth]{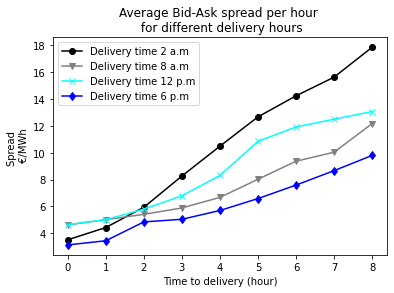}
\end{figure}

\paragraph{Price volatility}
To estimate the empirical volatility, we consider mid-quote prices reconstructed from the limit order book data of the Germany delivery zone for January 2017, as explained above. The mid-quote price was computed on a uniform grid with a time step of 1 minute. In January 2017 the market was already relatively liquid: the average number of daily price changes for a given delivery hour varied between approximately 3400 for the least liquid delivery hour (2 AM to approximately 5800 for the most liquid delivery hour (6 PM). Given that, as we observed above, liquidity is concentrated in the last 5-6 hours, a one-minute interval during this time contains many price changes and the market microstructure effects are limited.

The observed midquote price is denoted $(\tilde P_t)_{t \in [0,T]}$. 
We denote by $n$ the number of observations in the data of January 2017 and by $\{t_0,\dots,t_i, \dots, t_n\}$ the (uniform) time grid over which the observations are available. In contrast with the integrated volatility whose estimator is generally given by $\widehat{\int_0^T \sigma^2_sds}= \sum_{i=1}^n \Delta \tilde P^2_{t_{i-1}}$, estimating the instantaneous volatility is less straightforward. 
Following \cite{kristensen2010nonparametric}, we use a kernel-based non parametric estimator of the instantaneous volatility:
\begin{equation}\label{volestimation}
    \hat \sigma_t^2 = \frac{\sum_{i = 1}^n K_h(t_{i-1}-t)\Delta \tilde P^2_{t_{i-1}}}{\sum_{i = 1}^n K_h(t_{i-1}-t)(t_{i}-t_{i-1})},
\end{equation}
where $K(.)$ is the Epanechnikov kernel: $K(x) = \frac{3}{4} (1-x^2) \mathbbm{1}_{[-1,1]}(x)$
and $K_h(x)= \frac{1}{h}K(\frac{x}{h})$. The parameter $h$ was taken equal to $0.08$ hour ($\approx 5$ minutes) after performing some cross-validation analyses and sensitivity tests. The paths of the estimated volatility as function of time to delivery for different delivery hours are given in Figure \ref{FigVol}. We observe that the volatility increases as delivery time draws near and market participants trade more actively, giving an empirical evidence of the presence of the Samuelson effect in electricity market. 

\begin{figure}[h]
      \centering
      \caption{Instantaneous market volatility for different delivery hours}
      \label{FigVol}
      \includegraphics[width =0.5\textwidth]{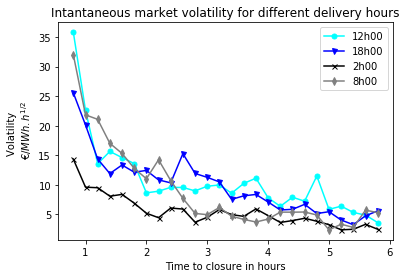}
\end{figure}

\paragraph{Correlation between price and renewable indeed forecasts}
We finally study the empirical correlation between the intraday market prices and the renewable wind production forecasts. Unlike the rest of the paper, here we use actual wind infeed forecasts, not the demand forecasts. To compute empirical correlation estimates, we use the limit order book data from the intraday EPEX market of the  first three months of 2015 for the Germany delivery zone, from which, as before, we compute the mid-quote prices. The production forecasts correspond to the same period and are updated every 15 minutes for each delivery hour. In Figure \ref{FigPearsonempirique}, we plot the correlation between the increments of the market price and the increments of the production forecasts for the delivery time 12h (averaged over 90 days in the dataset), together with the 2-standard deviation bounds. To match the forecast update frequency, the mid-quote price is also sampled at 15-minute intervals here.

\begin{figure}[h]
      \centering
      \caption{Correlation between the market price increments and the renewable production forecast increments for the German delivery zone in winter 2015}
      \label{FigPearsonempirique}
      \includegraphics[width =0.6\textwidth]{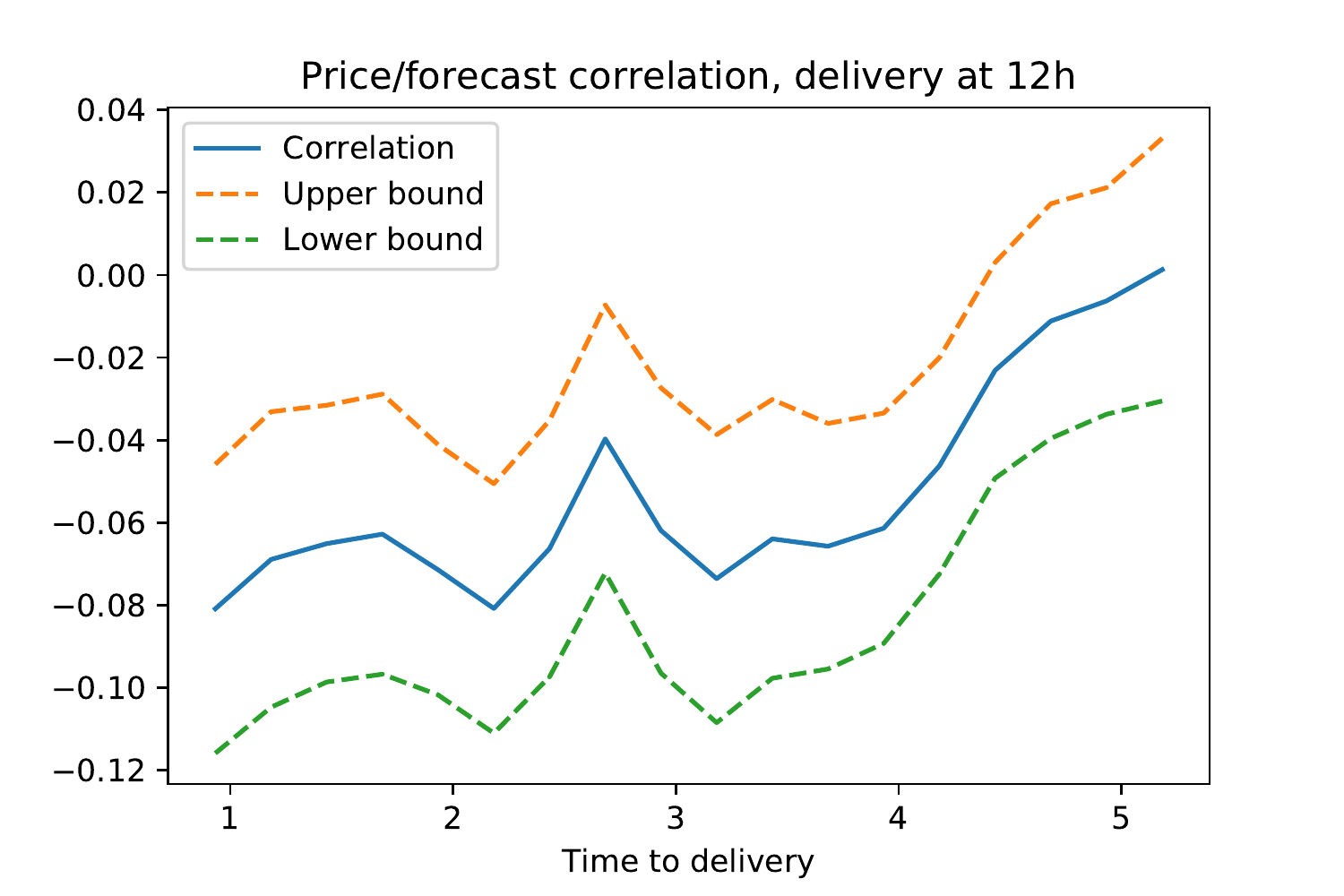}
\end{figure}

We find that the correlation between the price increments and those of the production forecast is negative and increases in absolute value as we approach the delivery time.  
\subsection{Numerical illustration of our model}
\paragraph{Model specification}
We now define the dynamics for the fundamental price and for the demand forecasts used in the simulations. We also give the chosen values of the different parameters. Our objective here is to illustrate the features of the model and show that it reproduces the stylized facts of the market prices. Therefore, the majority of the parameters are not precisely estimated, but are given plausible values. 

The evolution of the fundamental price is described as follows: 
\begin{equation}
    dS_t = \sigma_S dW_t
\end{equation}
where $\sigma_S$ is a constant and $(W_t)_{t \in [0,T]}$ is Brownian motion. 
We also assume that the liquidity function $\alpha(.)$ is given by
\begin{eqnarray}
\label{alpha}
    \alpha(t) & = & \alpha \times (T-t)+ \beta, \quad \forall t \in [0, T] 
\end{eqnarray}
where $\alpha$ and $\beta$ are strictly positive constants.  The liquidity function is decreasing with time. This assumption relies on the fact that, as we observed in Section \ref{stylized}, the market becomes more liquid as we get closer to the delivery time and it is less costly to trade when the market is liquid. 

To simulate demand forecasts, we assume the following dynamics: 
\begin{eqnarray}
 d \bar X^N_{t} & = & \sigma_X d\bar B_t \\
 d \check X^i_{t} & = & \check\sigma_X dB^i_{t}, \qquad i \in \{1, \dots, N\}
\end{eqnarray} 
where $\sigma_X$ and $\check\sigma_X$ are  constants and $(\bar B_t)_{t \in [0,T]}$, $(B^i_{t})_{t \in [0,T]}$ are independent Brownian motions, also independent from $(W_t)_{t \in [0,T]}$.

In this illustration, we choose the same parameters for the dynamics of the common and the individual demand forecasts (that is, $\sigma_X = \check \sigma_X$). The common volatility is calibrated to wind energy forecasts in Germany 
over January 2015 during the last quotation hour, by using the classical volatility estimator 
\begin{equation}\label{volforecast}
   \sigma_X = \check{\sigma}_X = \frac{\sqrt{\Delta t}}{n' - 1}\sum_{i=1}^{n'}Y_i^2
\end{equation}
with $\Delta t$ the time step between two observations, $Y_i = X_{t_i} - X_{t_{i-1}}$ the increment between two successive observations and $n'$ the total number of observed increments. As the forecasts are updated every 15 minutes, there are three increments during the last trading hour, available on each day from the 3$^{rd}$ of January to the 31$^{th}$ of January. Thus, for each delivery hour we dispose of $n'= 87$ increments points to estimate the volatility.}  
The volatility, as well as the other model parameters are specified in Table \ref{parameters}.

\begin{table}[h]
\centering
   \begin{tabular}{ |c|c||c|c|}
     \hline
     Parameter & Value & Parameter & Value \\ \hline
     $S_0$ & $40$ \euro{}/MWh &  $a$ & $1$  \euro{}/MWh$^{2}$  \\ \hline
     $\sigma_{S}$ & $10$ \euro{}/MWh$\cdot$h$^{1/2}$ & $\lambda$ & $100$ \euro{}/MWh$^{2}$\\ \hline
     $\overline{X}_0, \, \check X^{i}_0$ & 0 MWh & $N$ & $100$ \\ \hline
      $\sigma_X$,$\check\sigma_{X}$ &  $73$ MWh/h$^{1/2}$ &  $\alpha$ & 0.14 \euro{}/MW$^{2}$$\cdot$h \\ \hline
     ${b}$ & {0}  & $\beta$ & 0.06 \euro{}/MW$^{2}$ \\ \hline
   \end{tabular}
   \caption{Parameters of the model}
   \label{parameters}
 \end{table}
\paragraph{Price trajectories.}
In Figure \ref{Fig2}, we plot a simulated trajectory of the fundamental price $S$ 
starting six hours before the delivery time (corresponding to $t=0$), up to the time $T$ of delivery, together with the market price $P$  associated with the different  settings studied in this paper: the $N$-player Nash equilibrium with $N=100$ players, the mean field and the $\epsilon$-Nash equilibrium. Graphs were all simulated with the same demand forecasts, initial values, volatilities and parameters as specified in Table \ref{parameters}.

\begin{figure}[h]
      \centering
      \caption{Model price trajectories (left) associated to a given common demand forecast trajectory (right) in different settings}
      \label{Fig2}
      \includegraphics[width =0.48\textwidth]{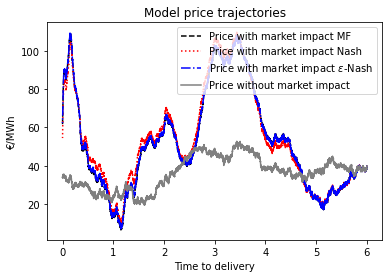}
    \includegraphics[width =0.48\textwidth]{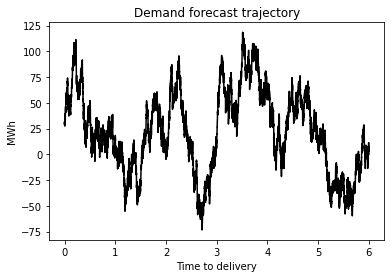}
\end{figure}
In all settings, the model reflects the price impact of the positions taken by the agents.  This price impact is influenced by the market price and the demand forecasts.
If agents anticipate to have overestimated the demand (negative values of the demand  process), there is an excess of supply in the market, thus the price impact is negative and the market price decreases. On the contrary, if they anticipate to have underestimated the demand  (positive values of the demand forecast process), there is a lack of supply and the market price increases. 
 
{In Figure \ref{Fig2}, the parameters are as specified in Table \ref{parameters}, with $b=0$. To emphasize the impact of crowd behavior on the cost of trading and the strategy, we display in Figure \ref{Figb} the  trajectory of the mean field at the equilibrium for different values of the parameter $b$. This is sufficient to capture the effect of a possible synchronization between agents since, from the discussion in Section \ref{Completeinfo}, only the common part of the strategy is impacted by the effective cost $\alpha(1+b/2)$.}

\begin{figure}[h]
      \centering
      \caption{Mean-field trajectories and demand forecasts as function of the impact of the crowd on the cost of trading }
      \label{Figb}
      \includegraphics[width =0.48\textwidth]{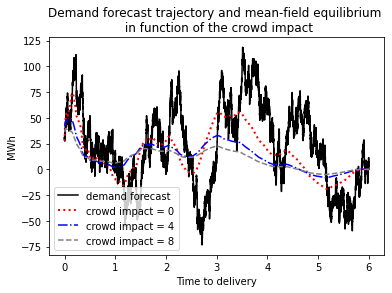}
\end{figure}
{As we saw in Section \ref{Completeinfo}, higher crowd trading parameter $b$ leads to increased trading costs and therefore reduced volatility of the aggregate strategy, which therefore follows the forecast updates less closely.  }

\paragraph{Volatility and correlation}

In this paragraph we compare the price volatility and the correlation between price and renewable infeed forecasts in our model with the empirical ones. We have already seen through theoretical analysis in Section \ref{marketeffect} that our model reproduces the observed features of the volatility; the goal of this paragraph is to confirm this using simulated prices. We once again highlight the fact that the market impact can induce an increase in the price variations but no changes in the quadratic variation since the price impact, though it is stochastic, has a finite quadratic variation.
However, the volatility estimated from discrete price observations, which is the only quantity relevant in practice, does increase in our model, as we shall see below. 

We focus on hourly products and on several different delivery hours: 2 AM, 8 AM, 12 PM and 6 PM to include both peak (high electricity demand) and off-peak (low electricity demand) times.
The volatility of the fundamental price $S$ is assumed to be constant, ($\sigma_S = 10$ \euro{}/MWh$\cdot$h$^{1/2}$) to ensure that the observed volatility changes are only due to the stochastic drift of the market price, i.e., the aggregate trading rates of the agents. 
The volatility of the production forecasts for the different delivery hours has been calibrated using the estimator defined in  \eqref{volforecast}  and is shown in Table \ref{voltab}.

\begin{table}[h]
\centering
   \begin{tabular}{ |c|c|}
     \hline
     Hour & Volatility (MWh/h$^{1/2})$ \\ \hline
     2h00 & $67$  \\ \hline
     8h00 & $81$\\ \hline
     12h00 &$73$   \\ \hline
     18h00 &  $73$  \\ \hline
   \end{tabular}
   \caption{Calibrated volatility of the production forecast for different delivery hours}
   \label{voltab}
 \end{table}

During peak hours, both market activity and liquidity are higher. To account for this phenomenon in our model, we chose different levels of the liquidity coefficients $\alpha$ and $\beta$  defined in \eqref{alpha} and presented in Table \ref{tabliquidity}. 
\begin{table}[h]
\centering
   \begin{tabular}{ |c || c | c | }
     \hline
     \backslashbox{Hours}{Coefficients} &$\alpha$ (\euro{}/h.MW$^{2}$) & $\beta$ (\euro{}/MW$^2$) \\ \hline
     2h00 & 0.24 & 0.10 \\ \hline
     8h00 & 0.10 & 0.04 \\ \hline
     12h00 & 0.06 & 0.02 \\ \hline
     18h00 & 0.14 & 0.06 \\
     \hline
   \end{tabular}
   \caption{Liquidity coefficients used for different delivery hours}
   \label{tabliquidity}
 \end{table}
 
Since calibrating the model to market data is not the purpose of this study, we chose plausible values for these coefficients in an ad hoc manner with lower trading costs corresponding to delivery hours for which the market is more liquid. All other model parameters are specified in Table \ref{parameters}. 

\begin{figure}[h]
      \centering
      \caption{Simulated model volatility for different delivery hours}
      \label{Fig3}

    \includegraphics[width =0.50\textwidth]{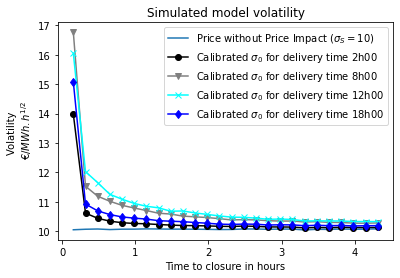}

\end{figure}

Figure \ref{Fig3} shows the estimated volatility of the simulated  model price $P$ in the Nash $N$-player game setting with $N=100$, averaged over 1000 simulations. The volatility was computed using the estimator \eqref{volestimation}, with the same window width and time step as in the empirical analysis.  From this graph we can see that the model is able to reproduce the increasing shape of the empirical market price volatility at the approach of the delivery time, and that it captures the different levels of volatility corresponding to the different delivery hours. 


\paragraph{Correlation between price and renewable infeed}

An important stylized feature of intraday market prices, observed empirically in \cite{kiesel2017econometric} is the correlation between the price and the renewable production forecasts. Figure \ref{FigPearson} plots the correlation between 15-minute increments of the simulated market price and the 15-minute increments of the simulated renewable production forecasts as function of time. For each time step, the correlation $\rho_t = \text{corr}(\Delta Y_t, \Delta P_t)$ is computed by Monte Carlo using the following estimator:
{
$$\hat \rho_{t} = \frac{\sum_{k =1}^{N_{sim}} (\Delta Y^k_t - \overline{\Delta Y}_t)(\Delta P^k_t - \overline{\Delta P}_t)}{\sqrt{\sum_{k = 1}^{N_{sim}} (\Delta Y^k_t - \overline{\Delta Y}_t)^2\sum_{k = 1}^{N_{sim}}(\Delta P^k_t - \overline{\Delta P}_t)^2}},$$
with $N_{sim}$ stands for number of simulations (we considered $N_{sim} = 50000$), {$\Delta Y^k_t =-(\overline{X}^{N,k}_{t+dt}-\overline{X}^{N,k}_{t})$}, $\Delta P^k_t = P^{N,k}_{t+dt}-P^{N,k}_{t}$} and $N = 100$. Notice that we use the minus sign in front of the forecast increment to plot the correlation of production forecasts, whereas $\overline X$ stands for the demand forecast. 
\begin{figure}[h]
      \centering
      \caption{Correlation between the simulated market price increments and the renewable production forecast increments in the model during the last  six hours of trading}
      \label{FigPearson}
      \centerline{\includegraphics[width =0.6\textwidth]{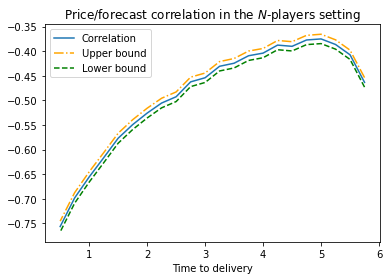}}
\end{figure}

We first note that the correlation is negative: an expected increase of the renewable production is correlated to a decrease in the market price and an expected lack of renewable production is correlated to an increase in the price. As we get closer to the delivery date, the agents trade more actively as new forecast information becomes available, and the market price becomes more strongly dependent on the forecast updates. 
The model outputs qualitatively match the results observed empirically. However, the strength of the correlation seems to be greater in the model than in reality. This can be explained by the fact that the model does not take into account other renewable means of production such as the solar energy. The slight increase of the correlation for longer times to delivery (the right-hand side of the graph) may be explained by the fact that the correlation is computed as the ratio of the covariance to the square root of the product of variances. While both quantities decrease for longer times to delivery, the denominator may decrease faster, explaining the slight increase in the correlation values. 

\paragraph{Convergence and approximations}

In Figure \ref{Fig4} we plot the mean field position, the aggregate $N$-player Nash equilibrium position and the aggregate position for the $\epsilon$-Nash equilibrium (respectively given by Theorem \ref{theorem_complete}, Theorem \ref{theorem_mfg} and Proposition \ref{propepsnashMF}) for a model with $N=5$ players and $N=100$ players. The trajectories were computed with the same simulated fundamental price, common production forecast and parameters as the Figure \ref{Fig2} above, over the 6 hours preceding the delivery time. The left graph ($N=5$) shows a big difference between the Nash equilibrium and $\epsilon$-Nash approximation on one hand, and the mean field on the other hand. This is explained by the individual production forecast taken into account in the Nash and $\epsilon$-Nash equilibria. 
When we consider a larger number of players, $N=100$, the three position trajectories are much closer to each other. This confirms the asymptotic convergence to the mean field discussed in Section \ref{CV} for the $N$-player Nash equilibrium and $\epsilon$-Nash equilibrium.  
\begin{figure}[h]
      \centering
      \caption{Aggregate position in different settings with $N=5$ (left) and $N=100$ (right) agents}
      \label{Fig4}
      \includegraphics[width =0.49\textwidth]{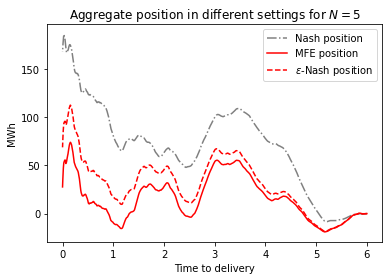}
   \includegraphics[width = 0.49\textwidth]{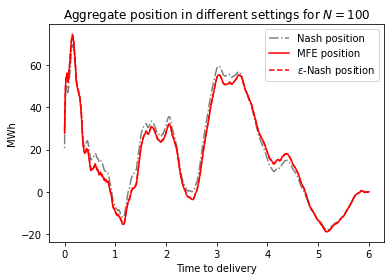}
\end{figure}

\section{Appendix}
\subsection{Proof of Theorem \ref{theorem_complete}}
\label{prooft1}
\begin{proof}
\emph{Step 1. First order condition of optimality for a single agent.} In this step, we are going to show that for fixed $\phi^{-i*}$, the strategy $\phi^{i*}$ satisfies \eqref{nash} if and only if $\mathbb E[\phi^{i*}_T-X^i_T|\mathcal F_0]=0 $ and there exists a square integrable $\mathbb F$-martingale $Y^i$ such that, almost surely,
\begin{align}
\notag & Y^i_t + \alpha(t){\left\{\dot \phi^{i*}_t\left(1-\frac{b_N}{N}\right) +  b_N\dot{\bar\phi}^{N*}_t\right\} } \\ &\qquad + S_t + a(\bar \phi^{N*}_t-\bar \phi^{N*}_0) - \frac{a}{N}(\phi^{i*}_t-\phi^{i*}_0) = 0,\ 0\leq t\leq T, \notag\\  &  Y^i_T  = \frac{a}{N} (\phi^{i*}_T-\phi^{i*}_0) + \lambda(\phi^{i*}_T - X^i_T)-S_T,
\label{mart}
\end{align} 
with the shorthand notation { $b_N = \frac{N}{N-1}\frac{b}{2}$.}
Assume that $\phi^{i*}$ satisfies \eqref{nash}. Then, for any adapted square integrable process $(\nu_t)_{0\leq t\leq T}$, and for any $\delta \in \mathcal F_0$, 
{$$
J^{N,i}(\phi^{i*} +\delta+ \int_0^\cdot \nu_s ds,\phi^{-i*}) \leq J^{N,i}(\phi^{i*},
\phi^{-i*}).
$$}
Developing the expressions, this is equivalent to 
\begin{align*}
&\mathbb E\Big[\int_0^T \nu_t \Big\{\alpha(t){\Big[\dot \phi^{i*}_t\Big(1-\frac{b_N}{N}\Big) +  b_N\dot{\bar\phi}^{N*}_t\Big] }+ S_t + a(\bar \phi^{N*}_t-\bar\phi^{N*}_0) {- \frac{a}{N}(\phi^{i*}_t-\phi^{i*}_0)}\Big\}dt 
\\& + \left(\frac{a}{N} (\phi^{i*}_T-\phi^{i*}_0) + \lambda(\phi^{i*}_T - X^i_T)-S_T\right)\int_0^T \nu_t dt  +\lambda(\phi^{i*}_T - X^i_T)\delta \Big]
\\ &+\mathbb E\Big[\frac{1}{2}\int_0^T \alpha(t)\nu_t^2 dt + \Big(\frac{a}{N} +   \frac{\lambda}{2}\Big)\Big(\int_0^T \nu_t dt\Big)^2 + \frac{\lambda}{2}\delta^2 \Big] \geq 0,\notag
\end{align*}
and since $\nu$ and $\delta$ are arbitrary, we see that optimality is equivalent to 
\begin{align}\notag &\mathbb E\Big[\int_0^T \nu_t \Big\{\alpha(t){\Big[\dot \phi^{i*}_t\Big(1-\frac{b_N}{N}\Big) +  b_N\dot{\bar\phi}^{N*}_t\Big]} + S_t + a(\bar \phi^{N*}_t-\bar\phi^{N*}_0) {- \frac{a}{N}(\phi^{i*}_t-\phi^{i*}_0)}\Big\}dt \\&  + \Big(\frac{a}{N} (\phi^{i*}_T-\phi^{i*}_0)+ \lambda(\phi^{i*}_T - X^i_T)-S_T\Big)\int_0^T \nu_t dt  \Big]=0,\label{optnu}
\end{align}
for any adapted square integrable $\nu$, together with the condition that 
$\mathbb E[\phi^{i*}_T - X^i_T|\mathcal F_0] = 0$. Now, assume that $Y^i$ is a square integrable martingale satisfying \eqref{mart}. Then, by integration by parts, the expression in the previous line equals
\begin{align*}
\mathbb E\Big[-\int_0^T \nu_t Y_t dt + Y_T \int_0^T \nu_t dt\Big] = \mathbb E\Big[\int_0^T \Big(\int_0^t \nu_s ds\Big)dY_t\Big] = 0,
\end{align*}
and we see that the optimality condition is satisfied. Conversely, assume that \eqref{optnu} is satisfied for any adapted square integrable process $\nu$, and let $Y^i$ be a martingale such that 
$$
Y^i_T = \frac{a}{N} (\phi^{i*}_T-\phi^{i*}_0) + \lambda(\phi^{i*}_T - X^i_T)-S_T.
$$
Then, by integration by parts, \eqref{optnu} is equivalent to 
\begin{multline*}
\mathbb E\Big[\int_0^T \nu_t \Big\{\alpha(t){\Big[\dot \phi^{i*}_t\Big(1-\frac{b_N}{N}\Big) +  b_N\dot{\bar\phi}^{N*}_t\big] } \\+ S_t + a(\bar \phi^{N*}_t-\bar\phi^{N*}_0)   {- \frac{a}{N}(\phi^{i*}_t}-\phi^{i*}_0)+ Y^i_t\Big\}dt\Big] = 0,
\end{multline*}
and since $\nu$ is arbitrary, we see that \eqref{mart} is satisfied. 

\emph{Step 2. Computing the average position.} Let $(\phi^{i*})_{i = 1, \dots, N}$ be a Nash equilibrium. We have seen that this is equivalent to \eqref{mart} together with the condition that 
$\mathbb E[\phi^{i*}_T - X^i_T|\mathcal F_0] = 0$ for $i=1,\dots,N$. Summing up these expressions for $i=1,\dots,N$ and denoting $\overline Y^N_t = \frac{1}{N}\sum_{i=1}^N Y^i_t$, we get
\begin{align*}
    &\overline Y^N_t + \alpha(t){\Big(1+\frac{b}{2}\Big)}\dot{\bar\phi}^{N*}_t + S_t + a{\frac{N-1}{N}}(\bar \phi^{N*}_t-\bar\phi^{N*}_0) = 0,\quad 0\leq t\leq T,\\
    &\overline Y^N_T  = \frac{a}{N} (\bar\phi^{N*}_T-\bar \phi^{N*}_0) + \lambda(\bar\phi^{N*}_T - \overline X^N_T)-S_T,\\
    &\mathbb E[\bar\phi^{N*}_T - \overline X^N_T|\mathcal F_0]  = 0.
\end{align*}

The first equation can be solved explicitly for $\bar \phi^{N*}$:
\begin{equation}
   \bar \phi^{N*}_t =\bar \phi^{N*}_0 -\int_0^t \eta^{N}_{s,t}
\frac{\overline Y^N_s + S_s}{\alpha(s){\left(1+\frac{b}{2}\right)}}ds.\label{Agg_position_Y}
\end{equation}
Denoting $\hat\phi_t: = \bar \phi^{N*}_t + I^N_t$,
we obtain simplified equations:
\begin{align*}
\hat \phi_t &= \bar \phi^{N*}_0-\int_0^t  \eta^{N}_{s,t}
\frac{\overline Y^N_s}{\alpha(s){\left(1+\frac{b}{2}\right)}}ds,\\\overline Y^N_T  &= (\frac{a}{N} +\lambda)(\hat \phi_T-I^N_T) -\frac{a}{N}\bar \phi_0^{N*} - \lambda \overline X^N_T-S_T.
\end{align*}

Substituting $\hat \phi_T$ into the second equation and taking the expectation, we obtain another linear equation, this time for $\overline Y^N_t$:

\begin{align*}
\overline Y^N_T & = -(\frac{a}{N} +\lambda) \int_0^T  \eta^{N}_{s,T}
\frac{\overline Y^N_s}{\alpha(s){\left(1+\frac{b}{2}\right)}}ds -(\frac{a}{N} +\lambda)I_T - \lambda \overline X^N_T + \lambda \bar \phi^{N*}_0-S_T,\\
   \overline Y^N_t &= -\left(\frac{a}{N}+\lambda\right)\int_0^t \overline Y^N_s\frac{ \eta^{N}_{s,T}}{\alpha(s){\left(1+\frac{b}{2}\right)}} ds  - \left(\frac{a}{N}+\lambda\right)\Delta^{N}_{t,T}\overline Y^N_t \\& - \left(\frac{a}{N}+\lambda\right) \widetilde I^N_t- \lambda \overline X^N_t+ \lambda \bar \phi^{N*}_0 - \widetilde S_t. 
\end{align*}

By integration by parts, this is equivalent to
\begin{align*}
   \overline Y^N_t = & -\left(\frac{a}{N}+\lambda\right)\int_0^t \Delta^{N}_{s,T}d\overline Y^N_s   - \left(\frac{a}{N}+\lambda\right)\Delta^{N}_{0,T}\overline Y^N_0  \\ &- \left(\frac{a}{N}+\lambda\right) \widetilde I_t- \lambda \overline X^N_t+ \lambda \bar \phi^{N*}_0 - \widetilde S_t.
\end{align*}

Taking $t=0$, we get:
$$
\overline Y^N_0=\frac{ - \left(\frac{a}{N}+\lambda\right) \widetilde I^N_0 - \lambda (\overline X^N_0- \bar \phi^{N*}_0)-\widetilde S_0}{1+\left(\frac{a}{N}+\lambda\right) \Delta^{N}_{0,T}}
$$
On the other hand, in differential form,
$$
\left\{1+\left(\frac{a}{N}+\lambda\right)  \Delta^{N}_{t,T}\right\}d\overline Y^N_t = - \left(\frac{a}{N}+\lambda\right)  d\widetilde I^N_t - \lambda d\overline X^N_t-d\widetilde S_t,
$$

which is solved therefore explicitly by
\begin{equation}\label{Ycomplete}
    \overline Y^N_t  = \frac{ - \left(\frac{a}{N}+\lambda\right) \widetilde I^N_0 - \lambda (\overline X^N_0- \bar \phi^{N*}_0)-\widetilde S_0}{1+\left(\frac{a}{N}+\lambda\right) \Delta^{N}_{0,T}}- \int_0^t \frac{ \left(\frac{a}{N}+\lambda\right)  d\widetilde I^N_s+ \lambda d\overline X^N_s+d\widetilde S_t}{1+\left(\frac{a}{N}+\lambda\right)  \Delta^{N}_{s,T}}
\end{equation}

Finally
\begin{equation}
    \begin{aligned}
    \bar \phi^{N*}_t &=\bar \phi^{N*}_0 -I^N_t + \int_0^t \overline Y^N_s d\Delta^{N}_{s,t} =\bar \phi^{N*}_0-I^N_t - \overline Y^N_0  \Delta^{N}_{0,t} - \int_0^t \Delta^{N}_{s,t} d \overline Y^N_s\\
    & = \bar \phi^{N*}_0- I^N_t + \Delta^{N}_{0,t}\frac{ \left(\frac{a}{N}+\lambda\right) \widetilde I^N_0 + \lambda (\overline X^N_0-\bar \phi^{N*}_0)+\widetilde S_0}{1+\left(\frac{a}{N}+\lambda\right) \Delta^{N}_{0,T}} \\
    &+ \int_0^t \Delta^{N}_{s,t}\frac{ \left(\frac{a}{N}+\lambda\right)  d\widetilde I^N_s+ \lambda d\overline X^N_s+ d\widetilde S_t}{1+\left(\frac{a}{N}+\lambda\right)  \Delta^{N}_{s,T}}.\label{Agg_position}
    \end{aligned}
\end{equation}
It remains to compute $\bar \phi^{N*}_0$ from the condition $\mathbb E[\bar\phi^{N*}_T - \overline X^N_T|\mathcal F_0] = 0$. Substituting this into the above expression, we find
$$
\bar \phi^{N*}_0=\overline X^N_0+\frac{\widetilde I^N_0 - \Delta^{N}_{0,T} \widetilde S_0}{1+\frac{a}{N} \Delta^{N}_{0,T}}
$$
so that
\begin{align}
     \overline\phi^{N*}_t   & = \overline X^N_0+\frac{1+\frac{a}{N} \Delta^{N}_{0,t}}{1+\frac{a}{N} \Delta^{N}_{0,T}}(\widetilde I^N_0 - \Delta^{N}_{0,T} \widetilde S_0) \notag\\
    &- (I^N_t-\Delta^N_{0,t}\widetilde S_0)+ \int_0^t \Delta^{N}_{s,t}\frac{ \left(\frac{a}{N}+\lambda\right)  d\widetilde I^N_s+ \lambda d\overline X^N_s+ d\widetilde S_t}{1+\left(\frac{a}{N}+\lambda\right)  \Delta^{N}_{s,T}}.\label{avstrat}
\end{align}    

\emph{Step 3: computing the position of the agent.} 
Let $\check \phi^{i*}_t:= \phi^{i*}_t - \bar \phi^{N*}_t$, \\ $\check X^i_t = X^i_t - \overline X^N_t$ and $\check Y^i_t:= Y^i_t -
\overline Y^N_t$. Then, $\check Y^i$ is an $\mathbb F$-martingale
and satisfies
$$
\check Y^i_T = \frac{a}{N}\check\phi^{i*}_T + \lambda(\check \phi^{i*}_T -\check X^i_T),\qquad
\check Y^i_t = -\alpha(t) \dot{\check \phi}^{i*}_t {+ \frac{a}{N}\check \phi^{i*}_t.}
$$
together with the additional condition $\mathbb E[\check \phi^{i*}_T - \check X^i_T|\mathcal F_t] = 0$. 
Similarly to the second part, this system admits an explicit solution:
$$
\check Y^i_t = -\frac{\lambda (\check X^i_0-\check \phi^{i*}_0)}{1+
  \left(\frac{a}{N}+\lambda\right)\widetilde\Delta^{N}_{0,T}} -\int_0^t\frac{\lambda d\check X^i_s}{1+
  \left(\frac{a}{N}+\lambda\right)\widetilde\Delta^{N}_{s,T}}.
$$
and 
$$
\check \phi^{i*}_t = \check X^i_0 + \int_0^t \widetilde \Delta^{N}_{s,t} \frac{\lambda d\check X^i_s}{1+
  \left(\frac{a}{N}+\lambda\right)\widetilde\Delta^{N}_{s,T}}  . 
$$

\end{proof} 

\begin{remark}
The optimal strategy is obtained by solving a series of linear equations from an equivalent characterization of optimality. Since all equations admit unique solutions, the equilibrium strategy is unique. This is a consequence of the strict concavity of the objective function in our linear quadratic setting. 
\end{remark}

\subsection{Proofs of Propositions \ref{CVNash} and \ref{propepsnashMF}
}\label{ProofCVNash}

\paragraph{Proof of Proposition \ref{CVNash}.}


For all $ t \in [0,T]$, we define: 
\begin{align*}
    & g^N_{s,t} =\frac{\Delta^{N}_{s,t}}{(1+(\frac{a}{N}+\lambda)\Delta^{N}_{s,T})}, \quad g_{s,t} =\frac{\Delta_{s,t}}{(1+\lambda\Delta_{s,T})}, \\
    & \tilde g^{N}_{s,t} =\frac{\widetilde\Delta^{N}_{s,t}}{(1+(\frac{a}{N}+\lambda)\widetilde\Delta^N_{s,T})}, \quad \tilde g_{s,t} =\frac{\widetilde \Delta_{s,t}}{(1+\lambda\widetilde \Delta_{s,T})},
\end{align*}
so that, for some constant $C$ depending only on the parameters $a$, {$b$}, $\alpha$ and $\lambda$, but not on other ingredients of the model,
$$
|\Delta^N_{s,t}-\Delta_{s,t}|+|g^N_{s,t}-g_{s,t}| + |\tilde g^N_{s,t} - \tilde g_{s,t}| \leq \frac{C}{N}. 
$$
for all $s,t\in[0,T]$. 
Now, let us consider the optimal strategies $(\phi_t^{i*})_{t\in [0,T]}$ and  $(\phi_t^{MF,i*})_{t\in [0,T]}$ of the generic agent $i$ respectively in the $N$-player setting and the mean field setting. Fix $t \in [0,T]$. Then, 
\begin{align*}
&\phi_t^{i*}-\phi_t^{MF,i*} =  \frac{a}{N}\frac{\Delta^N_{0,t}-\Delta^N_{0,T}}{1+\frac{a}{N}\Delta^N_{0,T}}(\widetilde I^N_0 - \widetilde S_0 \Delta^N_{0,T}) + \widetilde I^N_0 - \widetilde I_0 + \widetilde S_0(\Delta_{0,T}-\Delta^N_{0,T})\\ & - I^N_t+ I_t + \widetilde S_0(\Delta^N_{0,t}-\Delta_{0,t})+ \int_0^t (g^N_{s,t}-g_{s,t})\left\{ \left(\frac{a}{N}+\lambda\right)  d\widetilde I^N_s+ \lambda d\overline X^N_s + d\widetilde S_s\right\}\\ &+\lambda\int_0^t g_t(s) d( \widetilde I^N_s-\widetilde I_s+ \overline X^N_s-\overline X_s)\\ &+\int_0^t \lambda(\tilde g^N_{s,t}-\tilde g_{s,t}) d(X^i_s - \overline X^N_s) + \lambda\int_0^t \tilde g_{s,t}  d( \overline X_s- \overline X^N_s) 
\end{align*}
Therefore, for some constant $C$ depending only on the parameters $a$, {$b$}, $\alpha$ and $\lambda$, but not on other ingredients of the model,
\begin{align*}
\mathbb E[(\phi_t^{i*}-\phi_t^{MF,i*} )^2]&\leq \frac{C}{N^2} \mathbb E[(\widetilde I^N_0)^2] + \frac{C}{N^2} \mathbb E[(\widetilde S_0)^2]  + \mathbb E[(\widetilde I^N_0 - \widetilde I_0)^2] + \mathbb E[(I^N_t - I_t)^2] \\ &+ \frac{C}{N^2} \mathbb E[(\widetilde I^N_t)^2] + C\mathbb E[(\widetilde I^N_t - \widetilde I_t)^2]  +\frac{C}{N^2}\mathbb E[\widetilde S_t^2]\\ &+ \frac{C}{N^2} \mathbb E[(\overline X_t^N)^2] + C\mathbb E[(\overline X^N_t - \overline X_t)^2] + \frac{C}{N^2} \mathbb E[(\check X^i_t)^2]\\
&\leq \mathbb E[(I^N_t - I_t)^2] + \frac{C}{N^2} \mathbb E[( I^N_T)^2] + C\mathbb E[(I^N_T -  I_T)^2]\\ &+\frac{C}{N^2} \mathbb E[S_T^2]+\frac{C}{N^2} \mathbb E[(\overline X_t)^2] + \frac{C}{N^2} \sum_{i=1}^N\mathbb E[(\check X_t^i)^2]\\
&\leq \frac{C}{N^2}  \mathbb E[\sup_{0\leq s\leq T} S_s^2]  +  \frac{C}{N^2} \mathbb E[(\overline X_t)^2] + \frac{C}{N} \mathbb E[(\check X_t^i)^2].
\end{align*}
where the estimate for the first line above is obtained through Jensen's inequality. The other estimates of the proposition are obtained in a similar way. 

\paragraph{Proof of Proposition \ref{propepsnashMF}.} To lighten notation, and since we now have only one strategy, we omit in this proof the superscript MF in the candidate strategy $\phi^{MF,i*}$. The "distance to optimality" for this strategy is estimated as follows. 
\begin{align*}
&J^{N,i}(\phi^{i},\phi^{-i*}) - J^{N,i}(\phi^{i*},\phi^{-i*}) \\
&= J^{N,i}(\phi^{i},\phi^{-i*}) - J^{MF}(\phi^{i},\bar\phi^*)+J^{MF}(\phi^{i},\bar\phi^*) - J^{MF}(\phi^{i*},\bar\phi^*)\\&\quad + J^{MF}(\phi^{i^*},\bar\phi^*) - J^{N,i}(\phi^{i*},\phi^{-i*})\\
&\leq J^{N,i}(\phi^{i},\phi^{-i*}) - J^{MF}(\phi^{i},\bar\phi^*) + J^{MF}(\phi^{i^*},\bar\phi^*) - J^{N,i}(\phi^{i*},\phi^{-i*}).
\end{align*}
The second difference is estimated as follows:
\begin{align}
&J^{MF}(\phi^{i^*},\bar\phi^*) - J^{N,i}(\phi^{i*},\phi^{-i*}) \notag\\
= & -\mathbb E\Big[\int_0^T \dot\phi^{i*}_t\Big\{ a\big(\bar \phi^*_t - \bar \phi^{N*}_t-\bar \phi^*_0 + \bar \phi^{N*}_0\big) {+ \frac{\alpha(t)b}{2}\big(\dot{\bar \phi}^*_t-\dot{\bar \phi}^{N,-i*}_t\big)}\Big\}dt\Big]
\end{align}
Since 
\begin{align}
\dot{\bar \phi}^*_t-\dot{\bar \phi}^{N,-i*}_t = \dot{\bar \phi}^*_t-\dot{\bar \phi}^{N*}_t + \frac{\dot\phi^{i*}-\dot{\bar \phi}^{N*}}{N-1},\label{addest}
\end{align}
it follows from the Cauchy-Schwarz inequality and Proposition \ref{CVNash} that 
\begin{align}
|J^{MF}(\phi^{i^*},\bar\phi^*) - J^{N,i}(\phi^{i*},\phi^{-i*}) |\leq \frac{C}{\sqrt{N}}.\label{cauchyschwarz}
\end{align}


The first difference admits the following estimate.
\begin{align}
&J^{N,i}(\phi^{i},\phi^{-i*}) - J^{MF}(\phi^{i},\bar\phi^*) 
= a\mathbb E\Big[\int_0^T \dot\phi^{i}_t (\bar \phi^*_t - \bar \phi^{N*}_t-\bar \phi^*_0 + \bar \phi^{N*}_0)dt\Big]\notag\\ &- \frac{a}{N}\mathbb E\Big[\int_0^T \dot\phi^{i}_t (\phi^i_t-\phi^{i*}_t-\phi^i_0+\phi^{i*}_0)dt\Big]+ b\mathbb E \Big[\int_0^T\alpha(t)\dot\phi^{i}_t (\dot{\bar\phi}^*_t-\dot{\bar\phi}^{N,-i*}_t)dt\Big] \notag\\
&\leq \mathbb E\Big[\int_0^T (\dot\phi^{i}_t)^2 dt\Big]^{\frac{1}{2}}\Big\{a \mathbb  E\Big[\int_0^T (\bar \phi^*_t - \bar \phi^{N*}_t-\bar \phi^*_0 + \bar \phi^{N*}_0)^2 dt\Big]^{\frac{1}{2}}  \notag\\
& +  \frac{a}{N} \mathbb E\Big[\int_0^T (\phi^{i*}_t-\phi^{i*}_0)^2 dt\Big]^{\frac{1}{2}}+ \frac{b}{N-1} \mathbb E\Big[\int_0^T \alpha^2(t)(\dot\phi^{i*}_t-\dot{\bar\phi}^{N*})^2 dt\Big]^{\frac{1}{2}} \notag\\&+ b \mathbb E\Big[ \int_0^T\alpha^2(t)(\dot{\bar \phi}^*_t-\dot{\bar \phi}^{N,*}_t)^2 \Big]^\frac{1}{2}
\Big\}\leq  \frac{C_0}{\sqrt{N}}\mathbb E\Big[\int_0^T (\dot\phi^{i}_t)^2 dt\Big]^{\frac{1}{2}} \label{smallint}
\end{align}
for some constant $C_0<\infty$, which does not depend on $\phi^i$, in view of Proposition \ref{CVNash} and \eqref{addest}. 
On the other hand, the following estimate also holds true. 
\begin{align}
           &J^{N,i}(\phi^i, \phi^{-i*}) = - \mathbb{E}\Big[ \int_{0}^{T}\Big\{\frac{\alpha(t)}{2}\Dot{\phi^i_t}\Big(\dot\phi^i_t + b \dot{\bar \phi}^{N,-i*}_t\Big)+\Dot{\phi^i_{t}}(S_t + a(\bar \phi^{N*}_t - \bar\phi^{N*}_0) \Big\}dt \notag\\ & +\frac{a}{N}\int_0^T\dot\phi^i_t(\phi^i_t - \phi^i_0 - \phi^{i*}_t +  \phi^{i*}_0)dt +\phi^i_0 S_0 - (\phi^i_T-X^i_T)S_T+\frac{\lambda}{2}(\phi^i_{T}- X^i_{T})^2\Big],\notag\\
           &\leq -\frac{\bar\alpha}{2}\mathbb E\Big[\int_0^T (\dot\phi^i_t)^2dt\Big]  - \frac{\lambda}{2}\mathbb E[(\phi^i_T)^2]+ C_1 \mathbb E\Big[\int_0^T (\dot\phi^i_t)^2dt\Big]^{1\over 2} + C_2 \mathbb E[(\phi^i_T)^2]^{1\over2} + C_3,
\end{align}
where $\bar{\bar \alpha} = \max_{0\leq t\leq T}\alpha(t)$ and 
\begin{align*}
C_1 &= \frac{b}{2} \mathbb E\Big[\int_0^T\alpha^2(t) (\dot{\bar\phi}^{N,-i*}_t)^2dt\Big]^{1\over 2}  + a \mathbb E\Big[\int_0^T (\bar \phi^{N*}_t-\bar\phi^{N*}_0)^2dt\Big]^{1\over 2}\\  &+ \frac{a}{N}\mathbb E\Big[\int_0^T (\bar \phi^{i*}_t-\bar\phi^{i*}_0)^2dt\Big]^{1\over 2} + \mathbb E\Big[\int_0^T (S_t-S_0)^2\Big]^{1\over 2},\\
C_2 & = \mathbb E[(S_T-S_0)^2]^{1\over2} + 2 \mathbb E[(X^i_T)^2]^{\frac{1}{2}},\qquad 
C_3 = |\mathbb E[X^i_T S_T]|. 
\end{align*}
Thus, there exists a constant $C^*<\infty$, which does not depend on $\phi^i$, such that if 
$$
\mathbb E\Big[\int_0^T (\dot\phi^i_t)^2dt\Big]  +\mathbb E[(\phi^i_T)^2]>C^*, 
$$
then 
$$
J^{N,i}(\phi^i, \phi^{-i*}) - J^{N,i}(\phi^{i*}, \phi^{-i*})<0. 
$$
Therefore, from \eqref{cauchyschwarz} and \eqref{smallint} it follows that for any admissible strategy $\phi^i$, 
\begin{align*}
&J^{N,i}(\phi^i, \phi^{-i*}) - J^{N,i}(\phi^{i*}, \phi^{-i*}) \\ &\leq \mathbf 1_{\mathbb E[\int_0^T (\dot\phi^i_t)^2dt ] +\mathbb E[(\phi^i_T)^2]\leq C^*}\Big\{\frac{C}{\sqrt{N}} + \frac{C_0}{\sqrt{N}} \mathbb E[\int_0^T(\dot\phi_t^i)^2 dt]^{1\over 2}\Big\}\\
&\leq \frac{C}{\sqrt{N}} + \frac{C_0 \sqrt{C^*}}{\sqrt{N}}. 
\end{align*}
\end{document}